\newtheorem{observation}[lemma]{Observation}
\renewcommand{\claim}[1]{\smallskip\par\noindent\textit{Claim #1}}
\newenvironment{claimproof}{%
\par%
}
{
\hfill $\triangle$%
\smallskip\par
}
\newcommand{\cNP}{\hbox{\textsf{NP}}}
\begin{document}

\title{Intersection Graphs of Non-Crossing Paths\thanks{A preliminary version of this article appeared at WG 2019~\cite{wg2019}, and a preprint is available at~\href
{http://arxiv.org/abs/1907.00272}{arxiv.org/abs/1907.00272}}}

\author{Steven Chaplick\orcidID{0000-0003-3501-4608}
\thanks{Part of this research was conducted while the author was employed at Lehrstuhl f\"ur Informatik~I, Universit\"at W\"urzburg, and partially supported by DFG grant WO$\,$758/11-1.}
}

	\institute{Department of Data Science and Knowledge Engineering, Maastricht University, The Netherlands\\
		\email{s.chaplick@maastrichtuniversity.nl}}

\maketitle

\begin{abstract}
We study graph classes modeled by families of non-crossing (NC) connected sets. 
Two classic graph classes in this context are disk graphs and proper interval graphs. 
We focus on the cases when the sets are paths and the host is a tree (generalizing proper interval graphs). 
Forbidden induced subgraph characterizations and linear time certifying recognition algorithms are given for intersection graphs of NC paths of a tree (and related subclasses).
A direct consequence of our certifying algorithms is a linear time algorithm certifying the presence/absence of an induced claw $(K_{1,3})$ in a chordal graph. 

For the intersection graphs of NC paths of a tree, we characterize the minimum connected dominating sets (leading to a linear time algorithm to compute one). 
We further observe that there is always an independent dominating set which is a minimum dominating set, leading to 
the dominating set problem being solvable in linear time. 
Finally, each such graph $G$ is shown to have a Hamiltonian cycle if and only if it is 2-connected, and when $G$ is not 2-connected, a minimum-leaf spanning tree of $G$ has $\ell$ leaves if and only if $G$'s block-cutpoint tree has exactly $\ell$ leaves (e.g., implying that the block-cutpoint tree is a path if and only if the graph has a Hamiltonian path).
\end{abstract}

\keywords{Clique Trees \and Non-crossing Models \and Dominating Sets \and Hamiltonian Cycles \and Minimum-Leaf Spanning Trees.}

\section{Introduction}

Intersection models of graphs are ubiquitous in graph theory and covered in many graph theory textbooks, see, e.g., \cite{Golumbic2004,MckeeM1999}. 
Generally, for a given graph $G$ with vertex set $V(G)$ and edge set $E(G)$, a collection $\mathcal{S}$ of sets, $\{S_v\}_{v \in V(G)}$, is an \emph{intersection model} of $G$ when $S_u \cap S_v \neq \emptyset$ if and only if $uv \in E(G)$. 
Similarly, we say that $G$ is the \emph{intersection graph} of~$\mathcal{S}$. 
One quickly sees that all graphs have intersection models (e.g., by choosing, for every $v \in V(G)$, $S_v$ to be the edges incident to $v$). 
Thus, one often considers restrictions either on the \emph{host} set (i.e., the domain from which the elements of the $S_v$'s can be chosen), collection $\mathcal{S}$, and/or on the individual sets $S_v$. 

In this paper we consider classes of intersection graphs where the sets are taken from a topological space, are \emph{(path) connected}, and are pairwise \emph{non-crossing}. 
A set $S$ is \emph{(path) connected} when any two of its points can be connected by a \emph{curve} within the set (note: a \emph{curve} is a homeomorphic image of a closed interval). 
Notice that, when the topological space is a graph, connectedness is precisely the usual connectedness of a graph and curves are precisely paths.  
Two connected sets $S_1,S_2$ are called \emph{non-crossing} when both $S_1 \setminus S_2$ and $S_2 \setminus S_1$ are connected. 
Our focus will be on intersection graphs of non-crossing paths. %

The most general case of intersection graphs of non-crossing sets which has been studied is the class of intersection graphs of non-crossing connected (NC-C) sets in the plane~\cite{Kratochvil1997}. 
These were considered together with another non-crossing class, the intersection graphs of disks in the plane or simply \emph{disk} graphs. 
The recognition of both NC-C graphs and disk graphs is \cNP-hard~\cite{Kratochvil1997}. 
More recently~\cite{Kang2012}, disk graph recognition was shown to complete for the \emph{existential theory of the reals} ($\exists\mathbb{R}$); note that all $\exists\mathbb{R}$-hard problems are \cNP-hard, see~\cite{Matousek2014} for an introduction to $\exists\mathbb{R}$.

One of the simplest cases of connected sets one can consider are those which reside in $\mathbb{R}$, i.e., the intervals of $\mathbb{R}$. 
The corresponding intersection graphs are precisely the well studied \emph{interval graphs}. 
Moreover, imposing the non-crossing property on these intervals leads to the \emph{proper} interval graphs -- which are usually (and equivalently) defined by restricting the guests intervals so that no interval strictly contain any other.
It has often been considered how to generalize proper interval graphs to more complicated hosts, but  
 simple attempts to do so involving the property that the sets are \emph{proper} are often uninteresting. 
For example, the intersection graphs of proper paths in trees or proper subtrees of a tree are easily seen as the same as their non-proper versions. 
We will see that the non-crossing property leads to natural new classes which generalize proper interval graphs. 

We formalize the setting as follows. For graph classes $\mathcal{S}$ and $\mathcal{H}$, a graph $G$ is an \emph{$\mathcal{S}$-$\mathcal{H}$} graph when each $v \in V(G)$ has an $S_v \in \mathcal{S}$ such that:
\begin{itemize}[noitemsep,topsep=0.5\baselineskip]
\item the graph $H = \bigcup_{v \in V(G)} S_v$ is in $\mathcal{H}$, and 
\item $uv$ is an edge of $G$ if and only if $S_u \cap S_v \neq \emptyset$. %
\end{itemize} 
Additionally, we say that $(\{S_v\}_{v \in V(G)},H)$ is an \emph{$\mathcal{S}$-$\mathcal{H}$ model} of $G$ where $H$ is the \emph{host} and each $S_v$ is a \emph{guest}, we will also refer to $S_v$ as the \emph{model of $v$}. 
We further state that $G$ is a \emph{non-crossing}-$\mathcal{S}$-$\mathcal{H}$ (NC-$\mathcal{S}$-$\mathcal{H}$) graph when the sets $S_v$ are pairwise non-crossing. 
In this context the proper interval graphs are the NC-path-path graphs.

Many classes of $\mathcal{S}$-$\mathcal{H}$ graphs have been studied in the literature; see, e.g.,~\cite{MckeeM1999}. 
Some of these are described in the table below together with the complexity of their recognition problems and whether a \emph{forbidden induced subgraph characterization (FISC)} is known. 
The table utilizes the following terminology. 
A \emph{directed tree (d.tree)} is a tree in which every edge $uv$ has been assigned one direction.
A \emph{rooted tree (r.tree)} is a directed tree where there is exactly one source node. 
A survey of path-tree graph classes is given in \cite{MonmaW86}.

Two further key graph classes here are the chordal graphs and the split graphs, defined as follows.
A graph is \emph{chordal} when it has no induced cycles of length four or more.
A graph is a \emph{split} graph when its vertices can be partitioned into a clique and an independent set. 
The split graphs are easily seen as a subset of the chordal graphs.

\begin{center}
\begin{tabular}{@{~~~}l@{~~~}l@{~~~}l@{~~~}l@{~~~}l@{~~~}l}
\hline
& Graph Class & Guest & Host~~~~ & Recognition & FISC? \\
\hline 
1 & interval & path & path & $O(n+m)$ \cite{CorneilOS2009} & yes \cite{LB1962} \\
2 & rooted path tree (RPT) & path & r.tree & $O(n+m)$ \cite{Dietz1984} & open \\
3 & directed path tree (DPT) & path & d.tree & $O(nm)$ \cite{ChaplickGLT2010} & yes \cite{Panda1999}  \\
4 & path tree (PT) & path & tree & $O(nm)$ \cite{Schaffer1993} & yes \cite{LevequeMP2009} \\
5 & chordal & tree & tree & $O(n+m)$ \cite{RoseTL1976} & by definition \\
\hline 
\end{tabular}
\end{center}

\paragraph{Results and outline.}
We study the non-crossing graph classes corresponding to graph classes 1--4 given in the table. 
Section~\ref{sec:prelim} contains background, terminology, and notation concerning intersection models.

In Section~\ref{sec:nc-path}, we provide forbidden induced subgraph characterizations for the non-crossing classes corresponding to 1--4 and certifying linear time recognition algorithms for them. 
Interestingly, this implies that one can test whether a chordal graph contains a claw in linear time. %
 (In contrast, for general graphs, the best deterministic claw-detection algorithms run in time $O(\min\{n^{3.252},m^{(\omega+1)/2}\})$~\cite{EisenbrandG04}, and $O(m^{\frac{2\omega}{\omega+1}})$~\cite{KloksKM00}, whereas the best randomized algorithm (succeeding with high probability) runs in time $O(n^\omega)$~\cite{WilliamsWWY15}; here $\omega$ is the exponent from square matrix multiplication and the $3.252$ is based on the time to compute the product of an $n \times n^2$ matrix and an $n^2 \times n$ matrix~\cite{GallU18}.)

The next two sections concern algorithmic results on domination and Hamiltonicity problems on NC-path-tree graphs. To obtain these results, we use the special structure of NC-path-tree models established in Section~\ref{sec:nc-path-tree-structure}. 
Note that the problems mentioned below are formalized in the corresponding sections.

In Section~\ref{sec:mds}, our main result is a characterization of the minimum connected dominating sets in NC-path-tree graphs. 
This leads to a linear time algorithm to solve the minimum connected dominating set problem, which also implies a linear time algorithm for the cardinality Steiner tree (ST) problem. 
In contrast, the minimum connected dominating set problem is known to be \cNP-hard on split graphs~\cite{WhiteFP85}, and, as such, on chordal graphs as well.
We further discuss the relationship between the (standard) minimum dominating set problem and the minimum independent dominating set problem on these graphs, observing that both can be solved in linear time on NC-path-tree graphs. 
Notably, the minimum dominating set problem is \cNP-hard on PT graphs~\cite{BoothJ1982}, and split graphs~\cite{CorneilP1984}, 
but it is polynomial time solvable on RPT graphs~\cite{BoothJ1982}. 
Further references and background on domination problems are given in Section~\ref{sec:mds}. 

In Section~\ref{sec:ham}, we again consider NC-path-tree graphs but now study the Hamiltonian cycle (HC) problem and minimum-leaf spanning tree problem (which generalizes the Hamiltonian path (HP) problem), on them. 
We show that 2-connectedness implies that each plane drawing of an NC-path-tree model leads to a distinct HC, and from each such plane drawing, an HC can be found in linear time. 
When such a graph is not 2-connected (i.e., when it has a cut-vertex) it cannot have an HC, but we can similarly observe a nice spanning substructure.
Namely, we show that for any NC-path-tree graph $G$ containing a cut-vertex, its block-cutpoint tree having $\ell$ leaves characterizes the presence of an $\ell$-leaf spanning tree in $G$. 
For example, as a special case, we obtain that $G$'s block-cutpoint tree is a path (i.e., has at most two leaves) if and only if $G$ has an HP.
Our characterization also leads to a linear time algorithm for the minimum-leaf spanning tree problem. 
Note that, the HC and HP problems are NP-complete on \emph{strongly chordal} split graphs~\cite{Muller96a}, and DPT graphs~\cite{Narasimhan89}, but easily solved (and similarly characterized) on proper interval graphs~\cite{Bertossi83}. 

We conclude with avenues for further research.

\section{Preliminaries}
\label{sec:prelim}
\paragraph{Notation.}
Unless explicitly stated otherwise, all the graphs we discuss in this work are connected, undirected, simple, and loopless. 
For a graph $G$ with a vertex $v$, we use $N_G(v)$ to denote the \emph{neighborhood} of $v$, and $N_G[v]$ to denote the \emph{closed neighborhood} of $v$, i.e., $N_G[v] = N_G(v) \cup \{v\}$. 
The subscript $G$ will be omitted when it is clear. %
For a subset $S$ of $V(G)$, we use $G[S]$ to denote the subgraph of $G$ induced by $S$. 
For a set of graphs $\mathcal{F}$, we say that a graph $G$ is $\mathcal{F}$-free when $G$ does not contain any $F \in \mathcal{F}$ as an induced subgraph. 

For graph classes $\mathcal{S}$ and $\mathcal{H}$, and an $\mathcal{S}$-$\mathcal{H}$ model $(\{S_v\}_{v \in V(G)}, H)$ of a graph $G$, we use the following notation. 
We refer to elements of $V(G)$ as \emph{vertices} and use symbols $u$ and $v$ to refer them whereas we call elements of $V(H)$ \emph{nodes} and use $x$, $y$, and $z$ to refer to them. 
For a node $x$ of $H$ we use $G_x$ to denote the set of vertices $v$ in $G$ where $S_v$ contains $x$. 
Observe that every set $G_x$ induces a clique in $G$. 
Note that Section~\ref{sec:nc-path-tree-structure} defines the terms \emph{terminal}, \emph{junction}, and \emph{mixed} that are also used in later sections of the paper.

Several special graphs are named and depicted in Fig.~\ref{fig:small-graphs} along with models of them. 
We will refer to these throughout this paper. 
Of particular note is the middle graph, $K_{1,3}$ aka the claw, where we will refer to its degree~3 vertex as its \emph{central vertex}. 

\begin{figure}[tb]
\centering
\begin{tabular}{ccp{0.7cm}cccp{0.7cm}cc}
\includegraphics[scale=1]{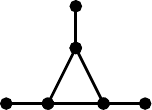} & \includegraphics[scale=1]{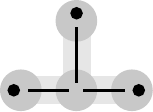} & & \includegraphics[scale=1]{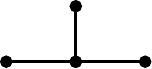} & \includegraphics[scale=1]{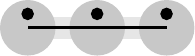}  & \includegraphics[scale=1]{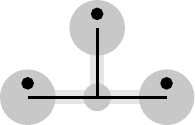} & & \includegraphics[scale=1]{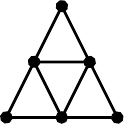} & \includegraphics[scale=0.8]{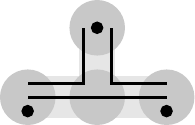} \\
\multicolumn{2}{l}{\quad ~net} & & \multicolumn{3}{l}{\quad claw ($K_{1,3}$)} & & \multicolumn{2}{l}{~3-sun} \\
\end{tabular}

\caption{Some small graphs and tree-tree models of them. 
In the models the nodes of the host graph are given as darkly shaded circles and its edges are lightly shaded corridors connecting them. 
Each subset $S_v$ is depicted by a tree (or single point) overlaid on the drawing of the host graph. }%
\label{fig:small-graphs}
\end{figure}

\paragraph*{Twin-free Graphs.}
For a graph $G$, two vertices $x$ and $y$ are called \emph{twins} when they have the same closed neighborhood, i.e., $N[x] = N[y]$. 
Note that, for the MDS problem, it is an easy exercise to show that it suffices to consider twin-free graphs. 
Also, as the vertex set of a graph can be easily partitioned into its equivalence classes of twins in linear time, one can distill the relevant twin-free induced subgraph of $G$ in linear time. 

\paragraph{Chordality and Clique Trees.}\label{sec:chordal}
This area is deeply studied and while there are many interesting results related to our work, we only pick out a few concepts and results which are useful in this paper. 
The starting point is that the chordal graphs are well-known to be the tree-tree graphs~\cite{Buneman74,Gavril1974,Walter1978}. 

For a chordal graph $G$, a \emph{clique tree} $T$ of $G$ has the maximal cliques of $G$ as its vertices, and for every vertex $v$ of $G$, the set $K_v$ of maximal cliques containing $v$ induces a subtree of $T$. 
In other words, a clique tree of $G$ is a tree-tree model of $G$ whose nodes are in bijection with the maximal cliques of $G$. 
Clique trees are very useful when discussing models where the host graph is a tree. 
When a graph has a tree-tree~\cite{Buneman74,Gavril1974,Walter1978}, path-tree~\cite{Gavril1978}, path-d.tree~\cite{MonmaW86}, path-r.tree~\cite{Gavril1975}, or path-path~\cite{FG1965} model, then it also has one that is a clique tree. Such results are also summarized in~\cite{MckeeM1999}.

We establish similar clique tree results for the corresponding NC graphs when the guests are paths. 
However, we remark that when the guests are trees, we cannot rely on clique trees. 
For example, the claw ($K_{1,3}$) is an NC-tree-tree graph, but it does not have an NC-tree-tree model that is a clique tree; in particular, in the center of Fig.~\ref{fig:small-graphs}, we depict two tree-tree models of the claw: one is the only clique tree (which is readily seen to fail the NC condition due to the point in the ``middle'' node) and the other is an NC-tree-tree model. 

Essential to the linear running time of our algorithms is the following property of maximal cliques of chordal graphs, and ultimately clique trees. 
For a chordal graph $G$, $\sum_{v\in V(G)} |K_v| \in O(n+m)$~\cite{Golumbic2004}. 
This implies that the total size of a clique tree $T$ is $O(n+m)$. 
So, any algorithm that is linear in the size of $T$ is also linear in the size of $G$.
Additionally, one can produce a clique tree of a chordal graph in linear time~\cite{BlairP1993,GalinierHP1995}

One final aspect of clique trees which is relevant for us is the study of chordal graphs with unique clique trees~\cite{Kumar2002}. 
One observation that they made is that claw-free chordal graphs have unique clique trees. 
This is relevant for us because, as we will see in Section~\ref{sec:nc-path}, the claw-free chordal graphs are precisely the NC-path-tree graphs. 
The uniqueness of the clique trees of proper interval graphs (a subclass of claw-free chordal graphs) was also observed (later) in~\cite{Ibarra2009}.
In fact, a very recent paper~\cite{Grussien2019} specifically studies the claw-free chordal graphs, providing a logarithmic-space isomorphism test while also re-proving that claw-free chordal graphs have unique clique trees. 
\section{Non-crossing Paths in Trees: Structure and Recognition}
\label{sec:nc-path}

In this section we characterize and recognize classes of intersection graphs of non-crossing paths in trees; namely, the classes of the following types: NC-path-tree, NC-path-d.tree, NC-path-r.tree, and NC-path-path. 
We first note that the \emph{claw} ($K_{1,3}$) is not an NC-path graph regardless of the host. 

\begin{observation}\label{obs:nc-path=>claw-free}
If $G$ is an NC-path graph, then $G$ is claw-free. 
\end{observation}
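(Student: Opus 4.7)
The plan is a proof by contradiction. Suppose $G$ contains an induced claw with central vertex $c$ and leaves $a, b, d$, and fix an NC-path model $\{S_v\}_{v \in V(G)}$ of $G$ in some host $H$. Then $S_c$ meets each of $S_a, S_b, S_d$, while the latter three are pairwise disjoint. The strategy is to show that two of $S_a, S_b, S_d$ must share a common endpoint of the path $S_c$, contradicting their disjointness.

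The key ingredient is a small topological claim which I would state and prove first: if $P$ is a path with endpoints $p, q$, and $T$ is any subset of the host with $T \cap P \neq \emptyset$ such that $P \setminus T$ is connected, then $T \cap P$ contains $p$ or $q$. To see this, parameterize $P$ as a homeomorphic image of $[0,1]$ sending $0, 1$ to $p, q$; if $T \cap P$ avoided both endpoints it would correspond to a nonempty $A \subseteq (0,1)$, and then picking any $x \in A$ exhibits $[0,x)$ and $(x,1]$ as two disjoint, nonempty, open subsets of $[0,1]$ whose union contains $[0,1] \setminus A$, violating connectedness.

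With this lemma in hand I apply it with $P = S_c$ and $T = S_x$ for each $x \in \{a, b, d\}$: the non-crossing condition guarantees that $S_c \setminus S_x$ is connected, so $S_x \cap S_c$ contains at least one endpoint of $S_c$. Since $S_c$ has only two endpoints, by pigeonhole two of $S_a, S_b, S_d$ — say $S_a$ and $S_b$ — contain a common endpoint of $S_c$, so $S_a \cap S_b \neq \emptyset$ and $ab \in E(G)$, contradicting the independence of the claw's leaves. The only real obstacle is the topological claim itself, and it is routine once a path is viewed as an arc; notably the argument is independent of the choice of host $H$, so the observation applies uniformly across NC-path-tree, NC-path-path, and all the other variants considered in this paper.
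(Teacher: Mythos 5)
Your proof is correct and follows essentially the same route as the paper's: both arguments hinge on the fact that the three pairwise-disjoint intersections with the central vertex's path cannot all contain an endpoint of that path, so the non-crossing condition is violated (equivalently, in your phrasing, two leaves would be forced to meet at a shared endpoint). The only difference is cosmetic --- where the final contradiction is placed --- so no further comparison is needed.
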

\begin{proof}
Suppose $G$ contains a claw with central vertex $u$ and pendant vertices $a$, $b$, $c$. 
Let $\mathcal{P}$ be a path-$\mathcal{H}$ model of $G$ where $\mathcal{P} = \{P_v\}_{v \in V(G)}$. Clearly, $P_a \cap P_u$, $P_b \cap P_u$ and $P_c \cap P_u$ are disjoint. As such, at most two of them include an endpoint of $P_u$. Thus, for some $d\in \{a,b,c\}$, $P_u \setminus P_d$ is disconnected. 
\end{proof}

This section proceeds as follows. 
The NC-path-tree graphs are shown to be the claw-free chordal graphs and the structure of NC-path-tree models is described.
From this structure, we then show that NC-path-d.tree $=$ NC-path-r.tree $=$ (claw,3-sun)-free chordal. 
This provides, as a nearly direct consequence, the classic result that proper interval graphs are precisely the (claw, 3-sun, net)-free chordal graphs~\cite{Roberts1970,Wegner1967}.
We conclude with linear time certifying recognition algorithms for NC-path-tree and NC-path-r.tree graphs.

\subsection{The Structure of NC-path-tree Models}
\label{sec:nc-path-tree-structure}

In this subsection we explore the structure of NC-path-tree models and prove our FISCs along the way. 
We first take a slight detour to claw-free chordal graphs and prove the FISC of NC-path-tree graphs. 
In doing so we obtain the first insight into NC-path-tree models. 
Namely, that it suffices to consider clique trees and that the clique trees of these graphs are unique (see Theorem~\ref{thm:nc-path-tree-fisc}). 
We then take a closer examination of these clique NC-path-tree models and carefully describe the nodes they contain -- the results of this examination will be used repeatedly in the rest of the paper. %

\begin{theorem}
\label{thm:nc-path-tree-fisc}
A graph $G$ is claw-free chordal if and only if it is an NC-path-tree graph. Moreover, $G$ has a unique clique tree and this clique tree is an NC-path-tree model. 
\end{theorem}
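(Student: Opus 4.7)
The plan is to verify both directions of the equivalence and then invoke uniqueness of the clique tree.

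The forward direction is immediate: if $G$ is an NC-path-tree graph then $G$ is claw-free by the previous Observation, and chordal because every path-tree model is in particular a tree-tree model and the tree-tree graphs are exactly the chordal graphs. For the converse, let $G$ be a claw-free chordal graph and fix any clique tree $T$ of $G$ (which exists by chordality). Setting $T_v = \{K \in V(T) : v \in K\}$, the plan is to show that $(\{T_v\}_{v \in V(G)}, T)$ is already an NC-path-tree model of $G$: the intersection-graph condition is built into $T$ being a clique tree, so the real work is to verify that each $T_v$ is a path and that these paths are pairwise non-crossing.

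Each $T_v$ is a path by the standard chordal-graph argument. If some $T_v$ contained a node $K$ with three neighbors $K_1, K_2, K_3$ inside $T_v$, then by maximality of the $K_i$'s one can pick $a_i \in K_i \setminus K$, and the subtrees $T_{a_1}, T_{a_2}, T_{a_3}$ would sit in three distinct components of $T - K$. Then $a_1, a_2, a_3$ are pairwise non-adjacent neighbors of $v$, giving an induced claw at $v$ — a contradiction. For non-crossing, suppose two paths $P_u, P_v$ cross; after swapping $u$ and $v$ if necessary, $P_u \cap P_v$ is an internal subpath of $P_u = K_0 K_1 \cdots K_n$, so $n \geq 2$. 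By maximality, pick $\alpha \in K_0 \setminus K_1$ and $\beta \in K_n \setminus K_{n-1}$; the subtrees $T_\alpha, T_\beta$ are contained in the two components of $T$ obtained by deleting, respectively, the edges $K_0K_1$ and $K_{n-1}K_n$, and these components are pairwise disjoint and both avoid the ``middle'' nodes $K_1, \ldots, K_{n-1}$. Hence $\alpha\beta$ is a non-edge, and since $T_v$ meets this middle portion, $\alpha v$ and $\beta v$ are non-edges too. Combined with $u \alpha, u v, u \beta \in E(G)$, the set $\{u, \alpha, v, \beta\}$ induces a claw at $u$ — again contradicting claw-freeness.

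Finally, the clique tree is unique: this is a known property of claw-free chordal graphs, observed in \cite{Kumar2002} and noted in our earlier discussion, so the cleanest route is to cite it. One can in principle re-derive it with an analogous claw-forcing argument: two distinct clique trees would share the same multiset of minimal separators but differ in some edge $K_1K_2$ with separator $S$, forcing a ``twin'' edge $K_1'K_2'$ with the same separator in the other tree; choosing vertices from $S$, $K_1 \setminus K_1'$, and $K_1' \setminus K_1$ together with a suitable auxiliary neighbor would again produce an induced claw. The main technical obstacle is producing that auxiliary vertex and guaranteeing it is non-adjacent to the preceding two, which is why invoking \cite{Kumar2002} is the cleaner option for the sketch.
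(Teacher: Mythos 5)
Your proposal is correct and follows essentially the same route as the paper: derive one direction from claw-freeness of NC-path graphs plus chordal $=$ tree-tree, then show for a clique tree of a claw-free chordal graph that each $T_v$ is a path and that the paths are pairwise non-crossing by exhibiting an induced claw otherwise, and cite \cite{Kumar2002} for uniqueness. The only (cosmetic) difference is in the non-crossing step, where you choose the two end-clique witnesses as $\alpha \in K_0 \setminus K_1$ and $\beta \in K_n \setminus K_{n-1}$ and argue via components of $T$ minus an edge, which has the mild advantage of making all three pairwise non-adjacencies among $\alpha$, $\beta$, $v$ explicit, whereas the paper picks non-neighbors of $u$ in the end cliques and leaves the $v_1v_2$ non-adjacency implicit.
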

\begin{proof}%
\noindent $\Leftarrow$ Observation~\ref{obs:nc-path=>claw-free} and chordal graphs being tree-tree graphs imply this.

\noindent $\Rightarrow$ 
Let $T$ be a clique tree of a claw-free chordal graph $G$. 
In the two claims below, we first show that every subtree $T_v$ must be a path, and then we show that these paths are non-crossing. 
These two claims prove the characterization. 
The uniqueness of the clique tree of every claw-free chordal graph has been shown previously~\cite{Kumar2002}.

\claim{1: For every $v \in V(G)$, $T_v$ is a path.}
\begin{claimproof}
Suppose $T_v$ is not a path. Then $T_v$ contains some claw $x_0,x_1,x_2,x_3$ with central node $x_0$. However, since $G_{x_j}$ is a maximal clique (for each $j \in \{0,1,2,3\}$), for each $i \in \{1,2,3\}$, there is $v_i \in G_{x_i} \setminus G_{x_0}$. Thus $v, v_1, v_2, v_3$ induces a claw in $G$. 
\end{claimproof}

\claim{2: The set $\{T_v : v \in V(G)\}$ is non-crossing.}
\begin{claimproof}
Suppose that $T_u$ intersects $T_v$ but does not include either end of $T_v$. Let $x_1$ and $x_2$ be the endpoints of $T_v$. 
Now there must be $v_1 \in G_{x_1} \setminus N_G(u)$ and $v_2 \in G_{x_2} \setminus N_G(u)$. That is, $v, u, v_1, v_2$ induces a claw in $G$. 
\end{claimproof}
\end{proof}

We now study the structure of the clique NC-path-tree model $(\{P_v\}_{v\in V(G)},T)$ of a graph $G$. 
We introduce some terminology. 
A node $x$ of $T$ is called a \emph{terminal} when it is a leaf of every path which contains it, i.e., $x$ is not an internal node of any $P_v$. 
For example, the leaves of $T$ are terminals. 
Similarly, a node $x$ of $T$ is a \emph{junction} when it is an internal node of every path which contains it, i.e., $x$ is not a leaf of any $P_v$. 
A node of $T$ that is neither a terminal nor a junction is called \emph{mixed}. 
The remainder of this section consists~of
\begin{itemize}
 \item the main lemma describing $T$ in these terms (Lemma~\ref{lem:nc-path-tree.nodes}),
 \item  an observation connecting these terms with certain induced subgraphs of $G$ (Observation~\ref{obs:path-tree.nets.suns}), and
 \item a corollary regarding how the terminals can be used to partition $T$ into ``simple'' subtrees (Corollary~\ref{cor:nc-path-tree.structure}).
\end{itemize}

\begin{lemma}\label{lem:nc-path-tree.nodes}
For an NC-path-tree graph $G$, let $(\{P_v\}_{v\in V(G)},T)$ be its clique NC-path-tree model. 
A node $x$ of $T$ must satisfy the following properties: 
\begin{enumerate}[noitemsep,topsep=0pt]
\item If $x$ is mixed, then $x$ has degree two. 
\item If $x$ is a junction, then (i) $x$ has degree 3 and (ii) $x$'s neighbors are terminals.
\item If $x$ has degree four or more, then $x$ is a terminal.
\end{enumerate}
\end{lemma}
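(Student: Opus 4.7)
The plan is to prove all three parts by contradiction, in each case exhibiting a vertex $q \notin G_x$ that is adjacent to every vertex of $G_x$, so that $G_x \cup \{q\}$ is a clique strictly larger than $G_x$, contradicting the fact that $G_x$ is a maximal clique. The common ingredient is the following consequence of non-crossing: if $P_u$ contains $x$ as an internal node, using edges $xy_1, xy_2$ at $x$, then for any other $P_v \ni x$ the requirement that $P_u \setminus P_v$ and $P_v \setminus P_u$ both be connected forces the edges $P_v$ uses at $x$ to lie in $\{xy_1, xy_2\}$, or else to consist of one of $xy_1, xy_2$ together with some $xy_j$ (where $j \geq 3$) with the corresponding $y_1$- or $y_2$-side of $P_v$ matching $P_u$ exactly. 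I also use that, since $G$ is connected, every edge $xy$ of the clique tree has $G_x \cap G_y \neq \emptyset$, so for each neighbor $y_j$ of $x$ a separator vertex $v_j \in G_x \cap G_{y_j}$ exists.

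For (3), suppose $x$ has degree at least 4 and is not a terminal, so such a $P_u$ exists with edges $xy_1, xy_2$. For each $j \geq 3$, $P_{v_j}$ must use $xy_j$ paired with $xy_1$ or $xy_2$. A quick case check shows that all $v_j$ must pair with the same edge (say $xy_1$); otherwise two $P_{v_j}, P_{v_{j'}}$ share only $\{x\}$ and one of their differences has two components. The presence of $v_3$ and $v_4$ then excludes from $G_x$ every path whose edges at $x$ avoid $y_1$, since such a path would cross $v_3$ or $v_4$. Hence every $v \in G_x$ contains $y_1$, and any $w_1 \in G_{y_1} \setminus G_x$, which exists because the maximal cliques $G_x$ and $G_{y_1}$ are distinct, is adjacent to all of $G_x$ via the common node $y_1$, giving the desired contradiction. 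Part (1) follows the same template with the endpoint witness $P_w$ playing the role of the $v_j$'s: non-crossing forces $P_w$ to use $xy_1$ or $xy_2$, and for any third neighbor $y_3$ the separator $v_3$ cannot use $(xy_2, xy_3)$ since that would cross $P_w$; the rest proceeds as in (3). Part (2)(i) follows at once: a junction is not a terminal, so (3) gives degree $\leq 3$; an internal path at $x$ forces degree $\geq 2$; and a degree-2 junction would make every $v \in G_x$ contain both $y_1$ and $y_2$, so the same $w_1$-argument applies.

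The main obstacle is (2)(ii). Let $x$ be a junction of degree 3 with neighbor $y_1$ and suppose $y_1$ is not a terminal, so some $P_{u'}$ has $y_1$ internal, using either $(xy_1, y_1 z)$ at $y_1$ (Case A) or two non-$x$ edges $(y_1 z_1, y_1 z_2)$ (Case B). In Case B we have $u' \notin G_x$; applying parts (1), (2)(i), and (3) to $y_1$ shows that $y_1$ itself must be a junction of degree 3, so $G_x \setminus G_{y_1}$ is forced to contain a vertex of type $(xy_2, xy_3)$ while $G_x \cap G_{y_1}$ contains vertices of type $(xy_1, xy_k)$ extended to match $P_{u'}$ into the $z_i$-direction; a careful analysis of the separators at the edges $xy_1$ and $y_1 z_i$ then shows that some maximal clique in $T$ admits an extra common neighbor, violating maximality. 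In Case A the junction property at $x$ pins $P_{u'}$ to use $(xy_1, xy_2)$ at $x$, and non-crossing forces every $v \in G_x \cap G_{y_1}$ to extend through $y_1$ into the $z$-direction matching $P_{u'}$; this collapses the distinction between $G_{y_1}$ and $G_z$ unless some additional vertex separates them, and a final non-crossing check rules out any such vertex. The delicate point throughout (2)(ii) is that the contradiction is a global clique-tree fact rather than a local one at a single node, which is why coordinating the non-crossing constraints simultaneously at $x$ and $y_1$ is the hardest step of the proof.
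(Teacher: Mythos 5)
Your treatment of parts (1), (2)(i), and (3) is sound and, apart from a reorganization (you prove the degree-$\geq 4$ statement first and derive (1) and (2)(i) from the same template, whereas the paper proves (1) and (2)(i) directly and gets (3) as a corollary), it uses the same ingredients as the paper: the separator vertices $v_j \in G_x \cap G_{y_j}$ supplied by connectedness, the observation that an internal witness at $x$ forces every other path through $x$ to use one of its two edges, and the conclusion $G_x \subseteq G_{y_1}$ contradicting maximality. The case checks you gesture at ("a quick case check shows\dots") do go through, including the sub-case where a candidate path uses the $y_3$- and $y_4$-directions, which is killed by $P_u$ or by the other difference $P_{v_3}\setminus P_v$.

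Part (2)(ii), however, contains a genuine gap, and it is exactly the part you yourself flag as the hardest. In Case A your pivotal claim is that non-crossing with $P_{u'}$ alone "forces every $v \in G_x \cap G_{y_1}$ to extend through $y_1$ into the $z$-direction." That is false as a one-step deduction: a path $P_v = (y_1, x, y_2\text{-arm})$ whose $y_2$-arm covers that of $P_{u'}$ does not cross $P_{u'}$ and does not contain $z$. What actually forces the extension is a two-step propagation, which is how the paper argues: first the $(y_3,x,y_1)$-type path $P_{v'}$ (guaranteed by the junction structure at $x$) must contain $z$, since otherwise $P_{u'}\setminus P_{v'}$ splits into $\{z\}$ and the $y_2$-arm; only then does non-crossing with $P_{v'}$ force the $(y_2,x,y_1)$-type path to contain $z$ as well; and the final contradiction comes from a vertex $u \in G_{y_1}\setminus G_z$, whose path is forced down to $\{y_1\}$ and then crosses $P_{u'}$. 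In Case B you assert that "a careful analysis of the separators\dots shows that some maximal clique in $T$ admits an extra common neighbor," but no such analysis is given, and the stated target does not match what actually closes the case: the paper instead produces a vertex $w'$ with $P_{w'} \supseteq \{z_2, y_1, x\}$ (from maximality of $G_{y_1}$ versus $G_{z_1}$), notes that the junction at $x$ forces $P_{w'}$ to continue into $y_2$ or $y_3$, and observes that either choice creates a direct crossing with $P_v$ or $P_{v'}$. As written, your (2)(ii) is a plan rather than a proof, and its one concrete intermediate claim is misjustified; you need to supply the two-step forcing in Case A and the explicit crossing in Case B.
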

\begin{proof} We establish the claimed properties in order as follows.

\begin{claimproof}
\textbf{1.:} Suppose that $x$ has degree at least 3, is a leaf of $P_v$, and is an internal node of $P_u$. 
Further, let $y$ be the unique neighbor of $x$ in $P_v$. 
We see that $P_u$ includes $y$ (otherwise, $P_v$ and $P_u$ cross). 
Let $y'$ be the neighbor of $x$ in $P_u \setminus P_v$ and let $y''$ be a neighbor of $x$ which is not in $P_u$. 
Since $G$ is connected, there exists $u' \in G_x \cap G_{y''}$. 
Furthermore, $x$ is not a leaf of $P_{u'}$ (otherwise, $P_u$ crosses $P_{u'}$). 
Thus, similarly to $P_u$, $y$ belongs to $P_{u'}$. 
Now, since $G_x$ and $G_{y}$ are maximal cliques, there is $u'' \in G_x \setminus G_{y}$. 
Thus for $P_{u''}$ to neither cross $P_{u}$ nor $P_{u'}$ it must include both $y'$ and $y''$. 
However, this means $P_{u''}$ and $P_v$ cross. 
\end{claimproof}

\begin{claimproof}
\textbf{2.:} Suppose that $x$ is a junction and let $y_1, \ldots, y_k$ be the neighbors of $x$. 
Since $x$ is a junction, for every $v \in G_x$, $P_v$ contains exactly two $y_i$'s. 
Thus, if $k=2$, then $G_x \subseteq G_{y_1}$ -- contradicting $T$ being a clique tree. 
Now suppose $k \geq 3$ and consider $v \in G_x$ where (w.l.o.g.) $P_{v}$ contains $y_1$ and $y_2$. 
Since $G$ is connected, there must be $v' \in G_x \cap G_{y_3}$. 
Furthermore, (w.l.o.g.) $P_{v'}$  contains $y_1$ (otherwise, $P_v$ and $P_{v'}$ cross). 
Now, since $G_x$ and $G_{y_1}$ are maximal cliques, there is $v'' \in G_x \setminus G_{y_1}$. 
Notice that $P_{v''}$ must contain $y_2$ and $y_3$ in order for $P_{v''}$ to cross neither $P_{v}$ nor $P_{v'}$. 
Finally consider any $u \in G_x \setminus \{v,v',v''\}$. 
Notice that, in order for $P_u$ to not cross any of $P_{v}$, $P_{v'}$, or $P_{v''}$, it must contain at least two of $y_1,y_2,y_3$. 
In particular, if $k \geq 4$, then $G_x \cap G_{y_4} = \emptyset$ -- contradicting $G$ being connected. 
Thus, $k = 3$ (establishing (i)).  

Now, suppose that $y_1$ is not a terminal. By 1. and 2.(i), $y_1$ is either a junction with degree 3 or mixed with degree 2.

\textbf{Case 1:} \textit{$y_1$ is a junction with neighbors $x$, $z_1$, $z_2$.} 
Notice that each of $P_{v}$ and $P_{v'}$ must contain exactly one of $z_1$ or $z_2$. 
Moreover, w.l.o.g. they both must contain $z_1$ otherwise they will cross. 
However, since $y_1$ is a junction, we have vertices $w,w',w''$ such that  
$P_w \supseteq \{x,y_1,z_1\}$, $P_{w'} \supseteq \{x,y_1,z_2\}$ and $P_{w''} \supseteq \{z_1,y_1,z_2\}$. 
Moreover, both $P_w$ and $P_{w'}$ must contain either $y_2$ or $y_3$. 
Regardless of this choice, we end up with a crossing between either $P_{w'}$ and $P_v$ or $P_{w'}$ and $P_{v'}$. 
Thus, junctions cannot be neighbors.

\textbf{Case 2:} \textit{$y_1$ has degree 2 and is mixed.} 
Let $z$ be the neighbor of $y_1$ other than $x$ and let $w$ be a vertex of $G$ where $y_1$ is not a leaf of $P_w$, i.e., w.l.o.g. $P_w \supseteq \{z,y_1,x, y_2\}$. 
Notice that, $P_{v'}$ must also contain $z$ otherwise $P_{v'}$ and $P_w$ would cross. 
Similarly, since $P_{v'}$ now contains $z$, $P_v$ must also contain $z$ otherwise $P_v$ and $P_{v'}$ would cross. 
However, now a vertex $u \in G_{y_1} \setminus G_{z}$ must have $P_u = \{y_1\}$ but then $P_u$ crosses $P_w$. 
Thus, no neighbor of a junction is mixed. 
\end{claimproof}

\begin{claimproof}
\textbf{3.:} This follows immediately from 1. and 2.(i). 
\end{claimproof}
\end{proof}

\begin{observation}\label{obs:path-tree.nets.suns}
For an NC-path-tree graph $G$, let $(\{P_v : v\in V(G)\},T)$ be its clique NC-path-tree model. 
Let $x$ be a node of $T$ of degree at least three. 
\begin{enumerate}[noitemsep,topsep=0pt]
\item\label{prop:twin-free-junction} If $x$ is a junction, then $G$ contains a 3-sun. Also, if $G$ is twin-free, $|G_x|=3$. 
\item If $x$ is a terminal, then $G$ contains a net.
\end{enumerate}
\end{observation}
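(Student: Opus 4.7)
The plan is to use Lemma~\ref{lem:nc-path-tree.nodes} to pin down how paths can sit at $x$ and at its neighbors, then to pick out six vertices that induce the claimed 3-sun or net. In both cases these six vertices split as three triangle-vertices from $G_x$ together with three ``pendants'', one drawn from $G_{y_i} \setminus G_x$ for each of three chosen neighbors $y_1, y_2, y_3$ of $x$. Existence of each pendant is the maximality of $G_{y_i}$ as a clique distinct from $G_x$, and the adjacency pattern then follows directly from which component of $T - x$ each pendant's path lives in.

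For part~(1), by Lemma~\ref{lem:nc-path-tree.nodes}.2 the junction $x$ has exactly three neighbors $y_1, y_2, y_3$, each of which is a terminal. The vertices $v, v', v''$ produced inside the proof of Lemma~\ref{lem:nc-path-tree.nodes}.2 satisfy $P_v = \{y_1, x, y_2\}$, $P_{v'} = \{y_1, x, y_3\}$, and $P_{v''} = \{y_2, x, y_3\}$ (the paths cannot extend past the terminals $y_i$), so $\{v, v', v''\}$ forms a triangle inside $G_x$. For each $i$ pick $a_i \in G_{y_i} \setminus G_x$; its path avoids $x$ and hence lies entirely in the component of $T - x$ containing $y_i$. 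Comparing which $y_i$'s appear in each of $P_v, P_{v'}, P_{v''}$ shows that $a_i$ is adjacent to exactly the two of $v, v', v''$ whose path contains $y_i$, while $a_1, a_2, a_3$ are pairwise non-adjacent because their paths sit in different components of $T - x$. This is a 3-sun. For the twin-free assertion, every $u \in G_x$ has $P_u$ of the form $\{y_i, x, y_j\}$ with no further extension (because the $y_i$'s are terminals), so there are only three possible paths; two vertices realizing the same path have identical closed neighborhood $G_{y_i} \cup G_x \cup G_{y_j}$ and are therefore twins, forcing $|G_x| = 3$.

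For part~(2), $x$ is a terminal of degree at least three, so every path containing $x$ has $x$ as an endpoint and is therefore contained in $\{x\} \cup T_i$ for the unique component $T_i$ of $T - x$ into which the path enters. Choose three neighbors $y_1, y_2, y_3$ of $x$. For each $i$, the edge $xy_i$ of the clique tree yields $v_i \in G_x \cap G_{y_i}$, while maximality of $G_{y_i} \neq G_x$ yields $a_i \in G_{y_i} \setminus G_x$; note $P_{a_i} \subseteq T_i$. The vertices $v_1, v_2, v_3$ form a triangle since all contain $x$, and because $P_{v_j} \subseteq \{x\} \cup T_j$ and $P_{a_i} \subseteq T_i$ lie in disjoint parts of $T$ whenever $i \neq j$, one reads off that $a_i$ is adjacent to $v_j$ if and only if $i = j$, and that $a_i, a_j$ are non-adjacent for $i \neq j$. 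Hence $\{v_1, v_2, v_3, a_1, a_2, a_3\}$ induces a net.

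The main obstacle is the bookkeeping of non-adjacencies, specifically that $P_{v''}$ does not leak past $x$ into the $y_1$-subtree in part~(1), which uses that $P_{v''}$ already spends both of its two available $x$-neighbors on $y_2$ and $y_3$ and is a path, and the analogous constraint in part~(2), which is immediate once $x$ is a terminal. Once these non-adjacencies are checked, the positive adjacencies follow from inspection of which nodes the paths share.
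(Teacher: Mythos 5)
Your proof is correct and takes essentially the same route as the paper: fix three (terminal) neighbors of $x$, take a triangle inside $G_x$, attach one pendant per neighbor chosen from $G_{y_i}\setminus G_x$, and read off all adjacencies and non-adjacencies from which component of $T-x$ each path occupies, with the twin-free claim following from there being only three possible paths through a junction. (Your choice $a_i\in G_{y_i}\setminus G_x$ in part~1 is in fact what the argument requires; the paper's ``$u_i\in G_x\setminus G_{y_i}$'' at that point appears to be a typo, since such vertices would all lie in the clique $G_x$ and could not yield a 3-sun.)
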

\begin{proof}
We establish the claimed properties in order as follows.

\begin{claimproof}
\textbf{1.:} As in the proof of Lemma~\ref{lem:nc-path-tree.nodes}.2.(i) a junction $x$ in $T$ has three neighbors $y_1,y_2,y_3$ and vertices $v,v',v'' \in G_x$ such that $P_v \supseteq \{y_1,x,y_2\}$, $P_{v'} \supseteq \{y_1,x,y_3\}$ and $P_{v''} \supseteq \{y_2,x,y_3\}$. 
Additionally, since $x,y_1,y_2,y_3$ are maximal cliques, there are vertices $u_1,u_2,u_3 \in V(G)$ such that $u_i \in G_x \setminus G_{y_i}$ for each $i \in \{1,2,3\}$. 
Moreover, all of these vertices are distinct due to their paths being incomparable. 
Thus, by considering the 3-sun and its clique tree model given in Fig.~\ref{fig:small-graphs}, it is now easy to see that $G[v,v',v'',u_1,u_2,u_3]$ is a 3-sun. 
Furthermore, since $y_1,y_2,y_3$ are terminals, the paths $P_v,P_{v'},P_{v''}$ are the only distinct paths which are possible for vertices in $G_x$. In other words, every vertex in $G_x \setminus \{v,v',v''\}$ is a twin of one of $v$, $v'$, or $v''$. 
\end{claimproof}

\begin{claimproof}
\textbf{2.:} Let $y_1,y_2,y_3$ be distinct neighbors of $x$. Since $G$ is connected and $x,y_1,y_2,y_3$ are maximal cliques, we have $v_i \in G_x \cap G_{y_i}$ and $u_i \in G_{y_i} \setminus G_x$ for each $i \in \{1,2,3\}$. 
The $v_i$'s are distinct since $x$ is a terminal, and the $u_i$'s are distinct since their paths are disjoint. 
Thus, by considering the net and its clique tree model given in Fig.~\ref{fig:small-graphs}, it is easy to see that $G[v_1,v_2,v_3,u_1,u_2,u_3]$ is a net.
\end{claimproof}
\end{proof}

\begin{corollary}
\label{cor:nc-path-tree.structure}
For an NC-path-tree graph $G$ and its clique NC-path-tree model $T$, the edges of $T$ uniquely partition into connected subtrees so that each subtree $T'$ has one of the following two types. 
\begin{enumerate}
\item\label{prop:structure.junction} $T'$ consists of the three edges incident to a junction $x$, i.e., $T'$ is the $K_{1,3}$ formed by $x$ together with its three neighbors $y_1, y_2, y_3$ (all of which are terminals). 
\item\label{prop:structure.mixed-path} $T'$ is a path where the two end nodes are terminals and each inner node (if there are any) has degree 2 and is mixed.
\end{enumerate}
\end{corollary}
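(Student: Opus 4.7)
The plan is to lift the per-node classification of Lemma~\ref{lem:nc-path-tree.nodes} to an edge partition. The three structural consequences I would rely on are: (a) every junction has degree exactly $3$ and all three of its neighbors are terminals, so in particular no two junctions are adjacent and no mixed node is adjacent to a junction; (b) every mixed node has degree exactly $2$; and (c) every node of degree $\geq 4$ is a terminal. From (a) and (b), the only edges that can touch a junction are junction-terminal edges, while mixed nodes see only terminals and other mixed nodes.

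For existence, I would construct the partition as follows. For each junction $x$ of $T$, declare the three edges incident to $x$ to be one part; by (a) this is a $K_{1,3}$ whose three leaves are terminals, matching type~(\ref{prop:structure.junction}), and two such parts never share an edge because junctions are not adjacent. Let $E'$ be the remaining edges. Then every $e \in E'$ joins two non-junction nodes, and every mixed node $z$ still has both of its incident edges in $E'$ by (a). I would then peel off maximal paths from $E'$: starting at any terminal incident to $E'$, traverse edges of $E'$, and whenever the traversal reaches a mixed node use its unique other $E'$-edge to continue. Since $T$ is finite and acyclic, each such traversal terminates at another terminal, yielding a path with terminal endpoints and mixed internal nodes, i.e., a subtree of type~(\ref{prop:structure.mixed-path}). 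The resulting paths are pairwise edge-disjoint and exhaust $E'$.

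For uniqueness, I would argue by a forcing analysis on any candidate partition into subtrees, each of type~(\ref{prop:structure.junction}) or type~(\ref{prop:structure.mixed-path}). A mixed node $z$ cannot be an endpoint of the subtree containing it: type~(\ref{prop:structure.junction}) only has terminal leaves, and in type~(\ref{prop:structure.mixed-path}) the endpoints are also terminals; hence both edges at $z$ lie in the same part. A junction $x$ cannot be an internal node of a type-(\ref{prop:structure.mixed-path}) path (those internal nodes have degree $2$, whereas $x$ has degree $3$) and cannot be a leaf of any part (degree $3$), so $x$ is the center of a type-(\ref{prop:structure.junction}) part and all three of its edges are forced into that single part. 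These forcings together determine the partition uniquely, matching the one constructed above.

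The main obstacle, which is quite mild, is simply making the uniqueness argument airtight by ruling out all ways a mixed node or a junction could be distributed across parts; both cases reduce immediately to the degree and neighborhood information from Lemma~\ref{lem:nc-path-tree.nodes}, so this corollary is essentially a repackaging of that lemma into the form needed for Sections~\ref{sec:mds} and~\ref{sec:ham}.
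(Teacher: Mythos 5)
Your proof is correct and follows the same route as the paper: the paper's own proof is a one-liner stating that the partition is obtained from Lemma~\ref{lem:nc-path-tree.nodes} by cutting the edges of $T$ at the terminals, which is exactly the construction (and forcing argument for uniqueness) you spell out in detail.
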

\begin{proof}
This follows from Lemma~\ref{lem:nc-path-tree.nodes} and by simply partitioning the edges of $T$ into maximal connected sets delimited by the terminals of $T$. 
\end{proof}

\subsection{Restricted Host Trees}

Here we relate and characterize the classes of NC-path-d.tree, NC-path-r.tree, and NC-path-path graphs as stated in the next two theorems. 
While Theorem~\ref{thm:nc-path-path.fisc} (concerning NC-path-path graphs) is well known~\cite{Roberts1970,Wegner1967}, it also follows directly from our study of NC-path-tree graphs.

\begin{theorem}\label{thm:nc-path-r.tree-fisc}
A graph $G$ is (claw,3-sun)-free chordal if and only if it is NC-path-r.tree. Moreover, a graph has an NC-path-\textbf{d}.tree model if and only if it has a clique NC-path-\textbf{r}.tree. 
\end{theorem}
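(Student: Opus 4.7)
The plan is to establish two claims which, combined with trivial inclusions, yield both the equivalence and the ``moreover''.

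(C1) Every (claw, 3-sun)-free chordal graph has a clique NC-path-r.tree model.

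(C2) Every NC-path-d.tree graph is (claw, 3-sun)-free chordal.

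Since a clique NC-path-r.tree model is in particular an NC-path-r.tree model, and since an r.tree is a d.tree, (C1) and (C2) yield the cycle (claw, 3-sun)-free chordal $\Rightarrow$ clique NC-path-r.tree $\Rightarrow$ NC-path-r.tree $\Rightarrow$ NC-path-d.tree $\Rightarrow$ (claw, 3-sun)-free chordal, establishing both halves of the theorem.

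For (C1), I would start from the unique clique NC-path-tree model $T$ of $G$ given by Theorem~\ref{thm:nc-path-tree-fisc}. Observation~\ref{obs:path-tree.nets.suns}.\ref{prop:twin-free-junction} and 3-sun-freeness force $T$ to contain no junction. Lemma~\ref{lem:nc-path-tree.nodes} then implies every non-terminal node of $T$ is a degree-2 mixed node, and Corollary~\ref{cor:nc-path-tree.structure} decomposes $T$ into \emph{bones}, i.e., paths from terminal to terminal with degree-2 mixed interiors. Each $P_v$ lies entirely inside a single bone, since a terminal is never an interior node of any path containing it. I would then root $T$ at any terminal $r$ and orient all edges away from $r$; each bone becomes oriented monotonically along its length, because the node of a bone closest to $r$ must be one of its two terminal endpoints (an interior mixed node of the bone has degree 2 with both neighbors on the same bone, so its parent in the rooted tree would itself lie on the bone and be closer to $r$). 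Consequently every $P_v$, as a sub-path of a bone, becomes a directed ancestor-to-descendant path in the rooted tree, yielding a clique NC-path-r.tree model.

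For (C2), the claw-free chordal part follows from Observation~\ref{obs:nc-path=>claw-free} and the fact that chordal graphs are exactly the tree-tree graphs. The content is 3-sun-freeness. Suppose for contradiction that $G$ has an NC-path-d.tree model $(\{P_v\}, T)$ and contains a 3-sun on $v_1, v_2, v_3, u_1, u_2, u_3$ with $u_i \sim v_i, v_{i+1}$ (mod 3). The three directed paths $P_{v_1}, P_{v_2}, P_{v_3}$ pairwise intersect, so by the Helly property for subtrees of a tree they share a common node $x$. Analyze $x$: each $P_{v_i}$ uses at most one in-edge and one out-edge of $x$. The ``junction triangle'' configuration at $x$ (three paths using three different edge-pairs, analogous to a junction in Lemma~\ref{lem:nc-path-tree.nodes}) cannot be realized, since two same-direction edges at $x$ cannot form a directed transit path through $x$. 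Moreover, two paths through $x$ sharing one edge but differing on the other must, by non-crossing, coincide on their entire shared-direction extension. Together these two constraints force the three $P_{v_i}$'s to share a common incoming or outgoing ray at $x$ and hence line up on a single directed line through $x$.

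The final contradiction comes from placing the $P_{u_i}$'s. Within this forced line-up, the 3-sun adjacency pattern of the $u_i$'s with the $v_j$'s is precisely that of a 3-sun inside a proper interval model, which is well known to be impossible. The main obstacle will be the case analysis ruling out that some $P_{u_i}$ escapes the line through $x$ into a side-bone attached at one of the two terminals bounding the line: such an excursion meets the $P_{v_j}$'s only at the shared terminal, and the proper-interval structure of the $P_{v_j}$'s forces that terminal to belong either to all three or to at most one of $\{P_{v_1}, P_{v_2}, P_{v_3}\}$, in either case contradicting the required adjacency pattern $u_i \sim v_i, v_{i+1}$ and $u_i \not\sim v_{i+2}$.
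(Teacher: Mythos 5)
Your overall architecture is sound and your claim (C1) is essentially the paper's own argument for the forward direction, worked out correctly: no junctions, hence every node of degree at least three is a terminal, hence rooting at a terminal orients every bone monotonically and every $P_v$ becomes a directed path. The cycle of implications you set up does deliver both the equivalence and the ``moreover''.

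The problem is (C2), where you try to prove from scratch that an arbitrary NC-path-d.tree model cannot realize a 3-sun. The paper avoids this entirely by noting that NC-path-d.tree graphs are in particular path-d.tree graphs and citing the known fact that the 3-sun is not a path-d.tree graph~\cite{ChaplickGLT2010}; claw-freeness and chordality then come from Observation~\ref{obs:nc-path=>claw-free} and the tree-tree characterization, exactly as you say. Your direct argument has a concrete gap: from the two constraints you state (no junction-triangle of transit pairs at $x$, and coincidence of two transiting paths on a shared side), it does \emph{not} follow that $P_{v_1},P_{v_2},P_{v_3}$ ``line up on a single directed line through $x$.'' A Y-shaped configuration survives both constraints: $e_1$ directed into $x$, $e_2,e_3$ directed out, $P_{v_1}$ transiting $e_1\to e_2$, $P_{v_2}$ transiting $e_1\to e_3$ (coinciding on the $e_1$-side, so non-crossing), and $P_{v_3}$ lying on the $e_1$-side only and containing their common stem. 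Ruling this out requires invoking the $u_i$'s (e.g., $u$ adjacent to $v_1,v_2$ but not $v_3$ would have to meet the $e_2$- and $e_3$-branches while avoiding $x$, which is impossible), i.e., it belongs to the very case analysis you defer as ``the main obstacle.'' Since that analysis is both nontrivial and not carried out, (C2) is not established as written. The quickest repair is to drop the from-scratch argument and use the citation, as the paper does.
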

\begin{proof} \ %

$\Leftarrow$ It is known and easy to see that the 3-sun is not a path-d.tree graph \cite{ChaplickGLT2010}. 
Thus, NC-path-d.tree is a subclass of (3-sun)-free NC-path-tree = (claw,3-sun)-free chordal by Theorem~\ref{thm:nc-path-tree-fisc}. 

$\Rightarrow$ 
By Theorem~\ref{thm:nc-path-tree-fisc} and Observation~\ref{obs:path-tree.nets.suns}, for every (3-sun,claw)-free chordal graph $G$, there are no junctions in the clique NC-path-tree model $(\{P_v\}_{v \in V(G)},T)$ of $G$. 
Thus, since every node of $T$ with degree at least three is a terminal, rooting $T$ at any terminal results in an NC-path-r.tree model. 
\end{proof}

\begin{theorem}\label{thm:nc-path-path.fisc}
A graph $G$ is (claw,3-sun,net)-free chordal if and only if it is NC-path-path, i.e., proper interval. 
\end{theorem}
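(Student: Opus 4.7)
The plan is to leverage the structural results already established in this section—Theorem~\ref{thm:nc-path-r.tree-fisc} characterizing NC-path-r.tree graphs as (claw, 3-sun)-free chordal, together with Lemma~\ref{lem:nc-path-tree.nodes}, Observation~\ref{obs:path-tree.nets.suns}, and Corollary~\ref{cor:nc-path-tree.structure}—so that the only genuinely new ingredient is the role played by net-freeness: it forces the host tree of the clique NC-path-tree model to be a path.

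For the forward direction ($\Rightarrow$), I assume $G$ is (claw, 3-sun, net)-free chordal. By Theorem~\ref{thm:nc-path-r.tree-fisc}, $G$ admits a clique NC-path-tree model $(\{P_v\}_{v \in V(G)}, T)$. The contrapositive of Observation~\ref{obs:path-tree.nets.suns}.1 (together with 3-sun-freeness) rules out junctions in $T$, while the contrapositive of Observation~\ref{obs:path-tree.nets.suns}.2 (together with net-freeness) rules out terminals of degree $\geq 3$. By Lemma~\ref{lem:nc-path-tree.nodes}.1 mixed nodes have degree exactly $2$, so every node of $T$ has degree at most $2$. Hence $T$ is a path and $G$ is NC-path-path.

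For the backward direction ($\Leftarrow$), note first that any NC-path-path graph is NC-path-r.tree (view the host path as rooted at an endpoint), hence (claw, 3-sun)-free chordal by Theorem~\ref{thm:nc-path-r.tree-fisc}. It only remains to prove net-freeness. Suppose for contradiction that $G$ is NC-path-path and contains an induced net whose triangle is $\{a_1,a_2,a_3\}$ with pendants $b_i$ adjacent only to $a_i$. In any NC-path-path model, $P_{a_1}, P_{a_2}, P_{a_3}$ pairwise intersect on the path host, so Helly on a path yields a common point $x$. Since each $P_{b_i}$ misses $P_{a_j}$ for $j \neq i$ and those paths all contain $x$, every $P_{b_i}$ avoids $x$ and therefore lies in one of the two components of the path host minus $\{x\}$. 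By pigeonhole two of them, say $P_{b_1}$ and $P_{b_2}$, lie on the same side of $x$, and WLOG $P_{b_1}$ is the farther one from $x$. Now $P_{a_1}$ contains $x$ and stretches out to meet $P_{b_1}$; being a connected sub-interval of the host path, it engulfs $P_{b_2}$, creating a spurious edge $a_1 b_2$, contradiction.

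The main obstacle is exactly this net-freeness step of $(\Leftarrow)$: unlike the 3-sun $\Rightarrow$ junction implication of Observation~\ref{obs:path-tree.nets.suns}, which is used via its contrapositive in the forward direction, the statement ``net in $G$ implies a terminal of degree $\geq 3$ in $T$'' is not immediately available and we must argue directly on the one-dimensional host using Helly plus the two-sidedness of $T \setminus \{x\}$. The rest of the proof is a clean assembly of previously established pieces.
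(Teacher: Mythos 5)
Your proposal is correct and follows essentially the same route as the paper: the forward direction uses Theorem~\ref{thm:nc-path-r.tree-fisc} plus Observation~\ref{obs:path-tree.nets.suns} to force the clique model's host to have maximum degree two, and the backward direction reduces to Theorem~\ref{thm:nc-path-r.tree-fisc} together with the fact that the net is not an interval graph. The only difference is that the paper simply cites that last fact as classical (Lekkerkerker--Boland), whereas you prove it from scratch with the Helly-point/pigeonhole argument, which is a correct, self-contained substitute.
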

\begin{proof} \ %

$\Leftarrow$ It is known and easy to see that the net is not a path-path (interval) graph~\cite{LB1962}. 
Thus, NC-path-path is a subclass of (net)-free NC-path-r.tree = (claw,3-sun,net)-free chordal by Theorem~\ref{thm:nc-path-r.tree-fisc}. 

$\Rightarrow$ As in the proof of Theorem~\ref{thm:nc-path-r.tree-fisc}, we note that since $G$ is a (net,3-sun)-free NC-path-tree graph, by Observation~\ref{obs:path-tree.nets.suns}, its unique clique NC-path-tree model has maximum degree two. Thus, the host is a path. 
\end{proof}

\subsection{Recognition Algorithms}
\label{sec:nc-path.recog}

From our characterizations, there are straightforward polynomial-time certifying algorithms for the classes of NC-path-tree and NC-path-r.tree graphs. 
Specifically, since these classes are characterized as chordal graphs with an additional finite set of forbidden induced subgraphs, we can apply a linear time certifying algorithm for chordal graphs~\cite{RoseTL1976}, and then apply brute-force search for our additional forbidden induced subgraphs. 
If no forbidden induced subgraph is found, we can simply construct the unique clique tree of the given graph (e.g., using~\cite{GalinierHP1995}) and it will be an NC-path-tree (or NC-path-r.tree) model as needed to positively certify membership in our classes. 
However, we can do this more carefully and obtain linear time certifying algorithms as in the next theorem.
A direct consequence of our certifying algorithm is that one can determine whether a chordal graph contains an induced claw in linear time. 
As we mentioned before, this stands in contrast to the case of general graphs where the best deterministic algorithms run in time $O(\min\{n^{3.252},m^{(\omega+1)/2}\})$~\cite{EisenbrandG04}, and $O(m^{\frac{2\omega}{\omega+1}})$~\cite{KloksKM00}.

\begin{theorem}\label{thm:nc-path-tree-recog}
The classes NC-path-tree and NC-path-r.tree (= NC-path-d.tree) have linear-time certifying algorithms. In particular, one can certify the presence/absence of an induced claw in a chordal graph in linear time. 
\end{theorem}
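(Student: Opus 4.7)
My plan is a staged linear-time algorithm that piggybacks on chordality recognition, clique-tree construction, and two local scans over the clique tree. The total work stays in $O(n+m)$ because the clique tree satisfies the budget $\sum_{x \in V(T)} |G_x| = O(n+m)$ (and the same bound on separators along its edges), so any procedure whose cost is proportional to this quantity is linear in the size of $G$.

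I start by running the linear-time certifying chordality test of~\cite{RoseTL1976}. If $G$ is not chordal, output the induced hole as the negative certificate; otherwise build the unique clique tree $T$ of $G$ in linear time via~\cite{GalinierHP1995}. Next, I verify that every $T_v$ is a path. For each pair $(v,x)$ with $v \in G_x$ I maintain a counter $d_v[x]$ initialised to $0$, then traverse each edge $xy$ of $T$ and, for every $v \in G_x \cap G_y$, increment $d_v[x]$ and $d_v[y]$; the total cost is $O\bigl(\sum_{xy\in E(T)}|G_x\cap G_y|\bigr)= O(n+m)$. If some $d_v[x] \geq 3$, then $x$ has three neighbours $y_1,y_2,y_3$ in $T_v$; choose $v_i \in G_{y_i} \setminus G_x$ (which exist since distinct maximal cliques are incomparable), and output the claw $\{v, v_1, v_2, v_3\}$ exactly as in Claim~1 of Theorem~\ref{thm:nc-path-tree-fisc}.

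Assuming all $T_v$ are paths, the counter data reveal their two endpoints (the nodes with $d_v \leq 1$), and at each node $x$ we can tag every $v \in G_x$ as endpoint-type or interior-type, recording in the interior case which two neighbours of $x$ lie in $T_v$. A second pass over $T$'s edges then performs the non-crossing check locally: at every node $x$ that is interior to some $T_v$ through $\{y_1, y_2\}$, any other $u \in G_x$ must either extend from $x$ only into $\{y_1, y_2\}$ or terminate at $x$, and in the latter case the overlap of $T_u$ with $T_v$ must reach an endpoint of $T_v$. Every violation is exactly the crossing configuration of Claim~2 of Theorem~\ref{thm:nc-path-tree-fisc} and directly produces a claw $\{v, u, v_1, v_2\}$, with per-edge work bounded by $O(|G_x \cap G_y|)$. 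For the subclass NC-path-r.tree, additionally mark each node $x$ as a \emph{junction} when no $v \in G_x$ has $x$ as endpoint of $T_v$; by Observation~\ref{obs:path-tree.nets.suns}(\ref{prop:twin-free-junction}) any junction yields a 3-sun certificate, and in its absence rooting $T$ at any terminal leaf produces the desired NC-path-r.tree model via Theorem~\ref{thm:nc-path-r.tree-fisc}.

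The principal obstacle is the non-crossing check: a naive pairwise scan inside each $G_x$ would blow the budget, so the amortised book-keeping must confine per-edge work to the size of the corresponding separator---particularly when distinguishing a ``third-neighbour'' crossing at a single node from a ``short-overlap'' crossing that fails to reach an endpoint of $T_v$. Once this care is taken, the whole pipeline runs in $O(n+m)$, producing the clique tree (or its rooted version) as the positive certificate and the appropriate forbidden subgraph as the negative one. The same procedure yields the claimed linear-time induced-claw detection in chordal graphs, since by Claims~1 and~2 of Theorem~\ref{thm:nc-path-tree-fisc} every claw corresponds either to a non-path $T_v$ or to a crossing pair, both of which are detected.
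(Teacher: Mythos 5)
Your pipeline matches the paper's in its outer structure (certifying chordality test, clique tree, leaf/interior annotation of each $K \in K_v$, claw extraction from a non-path $T_v$, junction detection for the r.tree case), but there is a genuine gap exactly where you flag ``the principal obstacle'': the non-crossing check along maximal paths of mixed degree-2 nodes between two terminals. On such a path all the $T_u$ are subintervals, and a crossing is precisely a proper containment of one interval in the interior of another; your stated local test (``the overlap of $T_u$ with $T_v$ must reach an endpoint of $T_v$'') is inherently a property of pairs of whole paths, not of a single node or edge, and you never say how to evaluate it for all relevant pairs within the $O(|G_x\cap G_y|)$ per-edge budget --- you only assert that ``once this care is taken'' the bound holds. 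The paper closes this hole concretely: for each maximal terminal-to-terminal path it builds the auxiliary graph $G'$ (two dummy vertices $u_1,u_k$ plus the cliques of the inner nodes), observes that every vertex and edge of $G$ lands in at most one such $G'$ so the total size is $O(n+m)$, and then runs the known linear-time \emph{certifying} proper interval recognition algorithm~\cite{DengHH1996} on each $G'$, mapping any returned claw back to $G$. Without this (or an equivalent explicit sweep that certifies containment-freeness of the intervals), the linearity claim for the crossing check is unsubstantiated.

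A secondary inaccuracy: when a node $x$ is interior to $T_v$ via neighbours $y_1,y_2$ and some $T_u$ extends from $x$ into a third neighbour $y_3$, the pair $T_u,T_v$ need \emph{not} cross (e.g., if both end at the same node on the $y_1$-branch, both set differences are connected), so this ``violation'' does not ``directly produce a claw $\{v,u,v_1,v_2\}$ via Claim~2'' as you assert. A claw does exist in this configuration, but extracting it requires the longer chain of witnesses $u', u''$ constructed in the proof of Lemma~\ref{lem:nc-path-tree.nodes}.1 (and the analogous arguments in part~2 for junctions of degree $\geq 4$ or junctions adjacent to non-terminals), which is how the paper handles all crossings involving nodes of degree at least three before reducing the residual work to the mixed paths.
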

\begin{proof}
Recall that the size $\sum_{v \in V(G)} |K_v|$ of a clique tree is $O(n+m)$ (we use this implicitly throughout the following).
First, we run a linear-time certifying algorithm for chordal graphs, e.g.,~\cite{RoseTL1976}. 
Then, we construct a clique tree $T$ in linear-time~\cite{GalinierHP1995}. 
We then annotate the clique tree to mark, for each vertex, for each maximal clique $K$ in $K_v$, if $K$ is a leaf or an internal node of the model of $v$. 
This annotation can be carried out in linear time because the total size of the clique tree is $O(n+m)$. 
If some vertex $v$ uses $\geq 3$ cliques as leaves, we produce a claw as in Claim~1 of the proof of Theorem~\ref{thm:nc-path-tree-fisc}. 
If there is a mixed node $x$ of degree $\geq 3$, then we proceed as in the proof of Lemma~\ref{lem:nc-path-tree.nodes}.1. 
This provides us with a pair of paths that cross in linear time. 
Then, proceeding as in Claim~2 in the proof of Theorem~\ref{thm:nc-path-tree-fisc}, we identify a claw.
Now all of the nodes of degree $\geq 3$ are either terminals or junctions, and we mark them as such. 
So, if there is a junction $x$ with degree $\geq 4$, we proceed as in Lemma~\ref{lem:nc-path-tree.nodes}.2.(i) to identify a pair of paths that cross and, as before, report a corresponding claw. 
Furthermore, if a junction $x$ neighbors a non-terminal $y$, we proceed as in Lemma~\ref{lem:nc-path-tree.nodes}.2.(i) to identify a pair of paths that cross and (again) a corresponding claw. 

Now, no crossing between two paths can involve a node of degree $\geq 3$. 
So, it remains just to ensure no crossings occur on a path between such nodes. 
In particular, since the neighbors of all junctions are terminals, such a crossing must occur on a path connecting two terminals (where all of the inner nodes are mixed, and, by Lemma~\ref{lem:nc-path-tree.nodes}, have degree two in $T$). 
Let $x_1, \ldots, x_k$ be such a path. 
Clearly, this path of cliques represents an interval graph. 
Moreover, we will find a pair of crossing paths on it precisely when this interval graph is not a proper interval graph. 
Conveniently, this problem is known to be solvable in linear time~\cite{DengHH1996}. 
However, to obtain linear time in total (when processing all such paths) we need to be a bit careful. 
Namely, rather than simply checking whether each $G[\bigcup_{i=1}^k G_{x_i}]$ is a proper interval graph, for each such path we create the following auxiliary graph $G'$. 

\paragraph{\textbf{The graph} $\mathbf{G'}$ built from a path $x_1, \ldots, x_k$ in $T$ where $x_1$ and $x_k$ are terminals and each $x_i$ ($i \in \{2, \ldots, k-1\}$) is mixed.}
The vertex set of $G'$ is $\{u_1,u_k\} \cup \bigcup_{i=2}^{k-1} G_{x_i}$. 
In $G'$, for each $i \in \{2, \ldots, k-1\}$, we make $G_{x_i}$ a clique. 
Also, we make $u_1$ adjacent to $G_{x_1} \cap G_{x_2}$ and $u_k$ is adjacent to $G_{x_{k-1}} \cap G_{x_k}$. 
In this way, the size of $G'$ can easily be seen as linear in the size of $G[\bigcup_{i=2}^{k-1} G_{x_i}]$. 
Moreover, since we only consider paths connecting terminals, each vertex and edge of $G$ is contained in at most one $G'$. 
Finally, observe that $G'$ is interval and is a proper interval graph if and only if  $G[\bigcup_{i=2}^{k-1} G_{x_i}]$ is as well. 

Thus, running the certifying recognition algorithm for proper interval graphs on $G'$ will provide a claw when $G'$ is not a proper interval graph, and such a claw is easily mapped back to a claw in $G$. 

This completes the case of NC-path-tree graphs. 
For NC-path-r.tree graphs, we additionally check if $T$ contains junctions and proceed as in Observation~\ref{obs:path-tree.nets.suns}.1. 
In particular, if there are no junctions, we have an NC-path-r.tree model, and if a junction is present, we easily report a 3-sun to certify that the graph is not an NC-path-r.tree graph as described in Observation~\ref{obs:path-tree.nets.suns}.1. 
\end{proof}

\section{Domination Problems}
\label{sec:mds}

A \emph{dominating set} in a graph $G$ is a subset $D$ of $V(G)$ such that every vertex is either in $D$ or adjacent to a vertex in $D$. 
In the \emph{minimum dominating set (MDS)} problem a graph $G$ is given and the goal is to determine a dominating set in $G$ with the fewest vertices. 
The MDS problem is \cNP-complete on PT graphs~\cite{BoothJ1982}, and split graphs~\cite{CorneilP1984}, and line graphs of planar graphs~\cite{YannakakisG1980} (which are of course claw-free). 

A dominating set in a graph is \emph{independent} when the subgraph it induces is edgeless. 
Interestingly, the \emph{minimum independent dominating set (MIDS)} problem (defined analgously to the MDS problem) can be solved on chordal graphs in linear time~\cite{Farber1982}. 
For NC-path-tree graphs, the size of an MIDS is the same as the size of an MDS as shown in the conference version of this paper~\cite{wg2019}.  
However, this is also true for claw-free graphs~\cite{AllanL78,FaudreeFR97} (which, due to our characterization, trivially form a superclass of the NC-path-tree graphs).
This implies the following theorem. 

\begin{theorem}\label{thm:mds->mids}
For any NC-path-tree graph $G$, there is an independent dominating set that is also a minimum dominating set.
Moreover, such an independent dominating set can be found in linear time.
\end{theorem}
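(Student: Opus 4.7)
The plan is to deduce the theorem directly from two already-cited ingredients: the Allan--Laskar theorem~\cite{AllanL78} (extended in~\cite{FaudreeFR97}) that every claw-free graph $G$ satisfies $i(G) = \gamma(G)$, where $i(G)$ denotes the independent domination number and $\gamma(G)$ the domination number; and Farber's linear-time algorithm~\cite{Farber1982} for the minimum independent dominating set problem on chordal graphs. Since Theorem~\ref{thm:nc-path-tree-fisc} identifies the NC-path-tree graphs with the claw-free chordal graphs, both ingredients apply simultaneously to every graph $G$ in our class.

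For the existence statement, I would first invoke the equality $i(G) = \gamma(G)$, which holds since $G$ is claw-free, and then observe that any set witnessing $i(G)$ is, by definition, an independent dominating set whose size equals that of a minimum dominating set. This immediately yields an independent dominating set that is also a minimum dominating set.

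For the algorithmic statement, I would simply run Farber's algorithm on $G$ (which is chordal by Theorem~\ref{thm:nc-path-tree-fisc}), producing a minimum independent dominating set $D$ in $O(n+m)$ time; by the previous paragraph, $D$ is simultaneously a minimum dominating set. Hence no further computation is required.

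There is essentially no obstacle: the argument consists entirely of combining two named black-box results via the class identification of Theorem~\ref{thm:nc-path-tree-fisc}. The only point requiring care is to cite the Allan--Laskar result in its equality form rather than as a weaker inequality, since we need both that $D$ is independent dominating and that $|D|$ equals the overall domination number; this is precisely what is established in~\cite{AllanL78,FaudreeFR97} for claw-free graphs. No structural reasoning about the NC-path-tree model itself is needed for this theorem.
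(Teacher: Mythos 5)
Your proposal is correct and matches the paper's own argument exactly: the paper also derives this theorem by combining the claw-free equality $i(G)=\gamma(G)$ from~\cite{AllanL78,FaudreeFR97} with Farber's linear-time MIDS algorithm for chordal graphs~\cite{Farber1982}, via the identification of NC-path-tree graphs with claw-free chordal graphs in Theorem~\ref{thm:nc-path-tree-fisc}. No further comment is needed.
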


So, we turn to another natural domination problem on NC-path-tree graphs. 
A dominating set in a graph is \emph{connected} when the subgraph it induces is connected.
In the \emph{minimum connected dominating set (MCDS)} problem, the input is a graph $G$, and the goal is to find a connected dominating set with the fewest vertices. 
The MCDS problem is \cNP-hard even on \emph{line graphs} of planar graphs of maximum degree four~\cite[Lemma~46]{Munaro17}\cite[Theorem 10.5]{HermelinMLW19} (a quite restricted subclass of claw-free graphs) but fixed-parameter tractable on claw-free graphs~\cite{HermelinMLW19}. (In fact, under the \emph{Exponential Time Hypothesis}~\cite{ImpagliazzoPZ01}, there is no constant $c$ such that there is a $2^{o(k)}n^c$ time algorithm to decide whether a line graph has a connected dominating set of size $k$~\cite[Corollary 10.9]{HermelinMLW19}.)
This problem is also \cNP-hard on split graphs but can be solved in polynomial time on strongly chordal graphs~\cite{WhiteFP85}. 
Note that the strongly chordal graphs include the NC-d.path-tree graphs, but do not include the NC-path-tree graphs (since, e.g., the 3-sun is an NC-path-tree graph but it is not strongly chordal~\cite{Farber1983}). 
Interestingly, it has also been shown~\cite[Corollary 4.3, Theorem 4.4]{WhiteFP85} that, for chordal graphs, the MCDS problem and the (cardinality) Steiner tree problem, defined next, are equivalent under linear time reductions.
 
For a graph $G$ and subset $X$ of $V(G)$, a \emph{Steiner tree (ST) of $X$} is a subtree of $G$ that includes $X$. 
In the ST problem, the input is a graph $G$ and a subset $X$ of $V(G)$, and the aim to find a ST of $X$ with the fewest vertices.  

We will now establish the following theorem and corollary regarding the MCDS and ST problems on NC-path-tree graphs (note that the corollary follows simply from the theorem and~\cite[Theorem 4.4]{WhiteFP85}).

\begin{theorem}
\label{thm:mcds}
For any connected NC-path-tree graph $G$, a minimum connected dominating set of $G$ can be produced in linear time.
\end{theorem}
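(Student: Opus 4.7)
The plan is to reduce the MCDS problem on $G$ to a minimum-subtree-cover problem on the clique tree $T$, then solve that problem using the structural decomposition of $T$ from Corollary~\ref{cor:nc-path-tree.structure}.

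First, compute the unique clique tree $T$ of $G$ in linear time. I claim that $D \subseteq V(G)$ is a connected dominating set of $G$ if and only if $\bigcup_{u \in D} P_u = T$ as subgraphs. For the ``only if'' direction: every leaf $x$ of $T$ must lie in some $P_u$ with $u \in D$, because maximality of $G_x$ forces a vertex $v \in G_x \setminus G_y$ (where $y$ is the unique $T$-neighbor of $x$) to satisfy $P_v = \{x\}$, and such a vertex can only be dominated via $x$. Combined with $G[D]$ being connected --- which makes $\bigcup P_u$ a connected subgraph of $T$ --- the fact that $\bigcup P_u$ contains every leaf of $T$ forces $\bigcup P_u = T$. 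For the ``if'' direction: once every edge of $T$ is contained in some $P_u$, walking in $T$ from any $P_u$ to any $P_{u'}$ yields a chain of model paths each covering one of the consecutive edges on the walk, and two consecutive paths in this chain share the common endpoint of the edges they cover, hence overlap in $T$ and are adjacent in $G$. So $G[D]$ is connected, and domination is automatic since $T_D = T$ contains every $P_v$. Thus $|\mathrm{MCDS}(G)| = \min\{|D| : \bigcup_{u \in D} P_u = T\}$.

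Second, I solve this cover problem using the edge-partition of $T$ from Corollary~\ref{cor:nc-path-tree.structure}: Type~1 subtrees are junction-stars $K_{1,3}$ with three terminal leaves, and Type~2 subtrees are terminal-to-terminal mixed paths. Since every terminal is an endpoint of every model path containing it, and every junction has only terminal neighbors by Lemma~\ref{lem:nc-path-tree.nodes}, no model path spans two pieces of this partition; hence the minimum cover decomposes additively. At each junction $x$ with terminal neighbors $y_1, y_2, y_3$, Lemma~\ref{lem:nc-path-tree.nodes}.2 and the proof of Observation~\ref{obs:path-tree.nets.suns}.1 exhibit three ``pair paths'' $\{y_i, x, y_j\}$ in the model; any two of them cover $\{xy_1, xy_2, xy_3\}$ while no single one does, so the local minimum is exactly $2$. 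On each mixed-path segment $x_1 \cdots x_k$, the vertices whose paths lie inside the segment induce a proper interval graph, whose minimum path cover the classical greedy scan (pick the model path covering $x_1 x_2$ with largest right endpoint, then iterate from that endpoint) computes in time linear in the segment length. Combining these local minima produces an MCDS in total time $O(n+m)$, since the clique tree has size $O(n+m)$ and each node and edge is processed a constant number of times.

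The main obstacle is the ``if'' direction of the reduction: establishing that all edges of $T$ being covered forces $G[D]$ to be connected (not merely that $\bigcup P_u$ is a connected subset of $T$) requires the walk-and-chain argument along $T$. Once this reduction is in hand, the additivity across the Corollary~\ref{cor:nc-path-tree.structure} partition follows immediately from terminals being endpoints of every model path, and the local minima at junctions and on mixed-path segments are handled by structural inspection and a classical greedy scan, respectively.
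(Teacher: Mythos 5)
Your proposal is correct and follows essentially the same route as the paper: the equivalence between connected dominating sets and edge-covers of the clique tree is the paper's Claim~$\star$ in Lemma~\ref{lem:mcds-structure}, the additive decomposition over the terminal-delimited partition is exactly how the paper uses Corollary~\ref{cor:nc-path-tree.structure}, and the junction case (two non-twin vertices suffice and are necessary) matches Property~\ref{prop:mcds-junction}. The only (cosmetic) divergence is on the mixed-path segments, where you run the standard greedy interval-cover scan directly, while the paper instead computes a shortest $(u_1,u_k)$-path in the auxiliary interval graph $G'$ via~\cite{AtallahCL95}; both are linear and yield the same minimum.
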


\begin{corollary}
\label{cor:st}
For any NC-path-tree graph $G$ and subset $X$ of the vertices of a connected component of $G$, a minimum cardinality Steiner tree can be produced in linear time.
\end{corollary}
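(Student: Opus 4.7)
The plan is to invoke the linear-time equivalence of the MCDS and cardinality Steiner tree problems on chordal graphs, due to White, Farber and Pulleyblank~\cite[Theorem~4.4]{WhiteFP85}. Every NC-path-tree graph is chordal by Theorem~\ref{thm:nc-path-tree-fisc}, so that equivalence applies. Given an ST instance $(G,X)$ with $G$ an NC-path-tree graph, the WFP reduction builds in linear time an auxiliary chordal graph $G'$ whose minimum connected dominating sets encode the minimum Steiner trees of $X$ in $G$; feeding $G'$ through the MCDS algorithm of Theorem~\ref{thm:mcds} and then undoing the reduction would complete the proof in linear total time.

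The subtlety I would need to address is that the WFP reduction typically attaches a pendant (or small gadget) at each terminal of $X$, which in general destroys claw-freeness: an $x\in X$ with two pairwise non-adjacent neighbours in $G$ becomes the centre of a claw in $G'$. So $G'$ is in general only chordal and not NC-path-tree, and Theorem~\ref{thm:mcds} does not apply to it verbatim. I would resolve this by observing that each pendant attached to some $x\in X$ is forced into every connected dominating set of $G'$ (its unique neighbour $x$ must be in the set both to dominate it and to preserve connectivity), so that solving MCDS on $G'$ amounts to solving a \emph{pinned} MCDS problem on $G$ itself in which the vertices of $X$ are required to belong to the chosen connected subgraph. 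This pinned variant can be handled by a direct modification of the algorithm behind Theorem~\ref{thm:mcds}: that algorithm operates on the canonical clique tree $T$ of $G$ and on its edge decomposition into junction-stars and terminal-to-terminal paths (Corollary~\ref{cor:nc-path-tree.structure}), and the extra ``must-cover'' constraint, namely that the chosen subtree of $T$ contain every clique in $\bigcup_{x\in X} K_x$, localises cleanly to each piece of the decomposition, so the overall running time stays in $O(n+m)$. The main technical obstacle is therefore the pinning verification and the bookkeeping needed to turn the ``dominate everything'' dynamic program into a ``dominate everything and contain $X$'' one; once that is in place, the rest is routine.
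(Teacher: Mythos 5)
Your opening paragraph is exactly the paper's own proof: Corollary~\ref{cor:st} is derived in one line from Theorem~\ref{thm:mcds} together with the linear-time equivalence of MCDS and cardinality Steiner tree on chordal graphs from~\cite[Theorem 4.4]{WhiteFP85}, with no further detail given. The problem lies in your repair. If $G'$ really were ``$G$ with a pendant attached to each $x\in X$,'' then a minimum connected dominating set of $G'$ would be a minimum connected set that contains $X$ \emph{and dominates all of $G$}, and that is not the minimum Steiner tree of $X$: take $G$ to be a long path $v_1\cdots v_n$ (a proper interval, hence NC-path-tree, graph) and $X=\{v_1,v_2\}$. The minimum Steiner tree is $X$ itself (two vertices), while any connected set containing $v_1$ that dominates $v_n$ has at least $n-1$ vertices. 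So the pendant construction cannot be the reduction of~\cite{WhiteFP85}, and --- more importantly --- the algorithm you end with, which turns ``dominate everything'' into ``dominate everything \emph{and} contain $X$,'' optimizes the wrong objective: it adds a constraint to MCDS where it should instead be \emph{dropping} the global domination requirement, and its output can exceed the optimum by an unbounded amount.

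The correct localisation goes the other way. Let $T_X$ be the minimal subtree of the clique tree $T$ spanning $\bigcup_{x\in X}K_x$. The argument of Claim~$\star$ in the proof of Lemma~\ref{lem:mcds-structure}, applied to $T_X$ instead of all of $T$, shows that the connected subgraphs of $G$ containing $X$ are exactly the sets of the form ``$X$ plus vertices whose models cover every edge of $T_X$,'' and covering those edges already dominates every vertex whose model lies in $T_X$ for free; vertices outside $\bigcup_{t\in V(T_X)}G_t$ need not be dominated at all. Hence the right procedure is to run the edge-covering machinery of Theorem~\ref{thm:mcds} on $T_X$ (equivalently, on $G[\bigcup_{t\in V(T_X)}G_t]$), with the bookkeeping that the vertices of $X$ are already paid for and already cover some of those edges. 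That restriction-to-a-subtree is the content of~\cite[Theorem 4.4]{WhiteFP85} for chordal graphs and is what the paper implicitly invokes. Your instinct that the one-line derivation hides a terminal-handling and class-preservation issue is sound, but the patch as written would fail.
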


To establish Theorem~\ref{thm:mcds} (and Corollary~\ref{cor:st}), we design an algorithm based on the following lemma. 
This lemma characterizes every MCDS in an NC-path-tree graph via the terminals and junctions of its clique NC-path-tree model. 
Recall that, as formalized in Corollary~\ref{cor:nc-path-tree.structure},  by thinking of the terminals in the clique NC-path-tree model $T$ of any NC-path-tree graph $G$ as delimiters, the edges of $T$ partition into the following  two special types of subtrees. 
\begin{itemize}
\item A junction $x$ together with its three neighbors $y_1, y_2, y_3$ (all of which are terminals). 
\item A path $P$ where the two end nodes are terminals and each inner node (if there are any) has degree 2 and is mixed. (In the lemma below, we will also differentiate the case when the $P$ is a single edge).
\end{itemize}

\begin{lemma}
\label{lem:mcds-structure}
Let $(\{P_v\}_{v \in V(G)},T)$ be a clique NC-path-tree model a graph $G$ where $G$ is not a clique. 
A subset $D$ of $V(G)$ is an MCDS of $G$ if and only if properties 1--3 below are satisfied.
\begin{enumerate}

\item\label{prop:mcds-junction} 
For each junction $x$ in $T$, $D$ contains exactly two vertices $u$ and $v$ from $G_x$ so that $P_u \cup P_v$ includes the three (terminal) neighbors of $x$ (i.e., $u$ and $v$ are not twins). 

\item\label{prop:mcds:terminal-termainal} 
For each edge $xy$ in $T$ where both $x$ and $y$ are terminals, $D$ contains exactly one vertex of $G_x \cap G_y$. 

\item\label{prop:mcds-mixed-path} 
For each path $(z_1, \ldots, z_k)$ in $T$ where $k \geq 3$, both $z_1$ and $z_k$ are terminals, and each $z_i$ ($i \in \{2, \ldots, k-1\}$) has degree 2, 
we have that the subgraph of $G$ induced by $D \cap \bigcup_{i=2}^{k-1} G_{z_i})$ is a shortest path connecting each $u \in G_{z_1}\setminus G_{z_2}$ to each $v \in G_{z_k} \setminus G_{z_{k-1}}$. 
\end{enumerate}
\end{lemma}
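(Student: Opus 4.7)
The plan is to prove both directions of the equivalence by exploiting the edge-decomposition of $T$ from Corollary~\ref{cor:nc-path-tree.structure}, treating each junction-subtree and each maximal mixed path as a local unit. The key starting observation is that, since every $P_v$ is a path in $T$, $D$ is dominating if and only if $T_D := \bigcup_{u \in D} P_u$ meets every $P_v$, and $G[D]$ is connected if and only if $T_D$ is connected as a subgraph of $T$. Moreover, since every terminal is a leaf of each $P_v$ containing it, no $P_v$ passes through a terminal; hence each $P_v$ is confined to a single subtree of the decomposition, which lets the three properties act essentially independently on their respective subtrees.

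For the forward direction (an MCDS $D$ satisfies 1--3), I would verify each property locally. At a junction $x$ with neighbors $y_1,y_2,y_3$, Observation~\ref{obs:path-tree.nets.suns}.\ref{prop:twin-free-junction} ensures that, up to twins, $G_x$ consists of exactly three vertices whose paths realize the three pairs $\{y_i,x,y_j\}$, and dominating all three types forces two non-twin vertices of $G_x$ whose paths collectively cover $\{y_1,y_2,y_3\}$; minimality then rules out a third. At a terminal-terminal edge $xy$, all vertices of $G_x \cap G_y$ are twins with $P_v=\{x,y\}$, so connecting $T_D$ across this edge while dominating these twins requires exactly one such vertex. Along a mixed path $(z_1,\ldots,z_k)$, the subgraph induced by the interior cliques is a proper interval graph; dominating the boundary vertices in $G_{z_1}\setminus G_{z_2}$ and $G_{z_k}\setminus G_{z_{k-1}}$ while keeping $T_D$ connected from one end to the other, and doing so minimally, forces $D$'s interior piece to form a shortest path between those boundary vertices.

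For the converse (properties 1--3 imply $D$ is an MCDS), I would verify connectivity, domination, and minimality in turn. Connectivity follows by gluing: each junction contributes two mutually-intersecting paths covering its three terminal-neighbors, each terminal-terminal edge contributes a vertex whose path spans both ends, and each mixed path contributes a chain of paths overlapping consecutively; the shared terminals between adjacent subtrees ensure these pieces link up into a connected $T_D$. Domination is checked clique by clique: either $D \cap G_x \neq \emptyset$, or every vertex of $G_x$ has a neighbor in $D$ via the covering paths inherited from the surrounding subtrees. Minimality is inherited from the local minimality established in each case, combined with the essential independence of the three local constraints.

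The main obstacle I anticipate is the careful bookkeeping at terminals shared by several subtrees in the decomposition: a single terminal may simultaneously be a neighbor of a junction, an endpoint of a mixed path, and an endpoint of a terminal-terminal edge, and we must track how the $D$-contributions from each surrounding subtree link up at this terminal without conflict or redundancy. In particular, the interaction between the boundary vertices in $G_{z_1}\setminus G_{z_2}$ from property~3 and the junction or edge structure on the far side of $z_1$ is the subtle combinatorial core, requiring a careful global consistency check on top of the local case analyses.
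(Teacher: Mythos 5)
Your overall approach matches the paper's: reduce connected domination to a condition on the clique tree, decompose $T$ via Corollary~\ref{cor:nc-path-tree.structure}, and argue each piece locally. One remark on the reduction: the paper isolates this as an explicit claim, namely that $S$ is a connected dominating set if and only if for \emph{every edge} $xy$ of $T$ the set $S$ contains a vertex of $G_x \cap G_y$ (equivalently, $\bigcup_{w\in S}P_w = T$). Your formulation ($T_D$ meets every $P_v$ and $T_D$ is connected) is equivalent, but the edge-cover form is what makes the decomposition genuinely independent: since no path crosses a terminal, the vertices eligible to cover the edges of distinct subtrees of the decomposition are disjoint, so the ``global consistency check at shared terminals'' you flag as the subtle core is in fact automatic and requires no work. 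I would recommend proving the edge-cover characterization first (the forward direction needs a short argument: a missed edge $xy$ separates $T$, domination puts vertices of $S$ on both sides, and every path between them in $G$ passes through $G_x\cap G_y$).

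The genuine gap is in Property~3, which you dispose of in one sentence (``doing so minimally forces $D$'s interior piece to form a shortest path''). This is the only nontrivial part of the lemma and it does not follow from minimality alone without an argument that the covering set \emph{induces a path at all}. The paper proves two sub-claims: (a) the inner vertices of any induced $(u,v)$-path cover exactly the edges of $(z_1,\dots,z_k)$; and (b) in any MCDS, the set $D_P$ of vertices whose paths meet an edge of $(z_1,\dots,z_k)$ induces a path whose ends are adjacent to $u$ and $v$. Claim (b) is where the work is: one shows $D_P$ contains a unique vertex in $G_{z_1}\cap G_{z_2}$ (two such would have nested paths, contradicting minimality), and that $G[D_P]$ is triangle-free --- a triangle would, by the Helly property, give three subpaths of $(z_1,\dots,z_k)$ through a common node, one of which is contained in the union of the other two and is therefore redundant. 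Triangle-freeness plus the facts that triangle-free interval graphs are caterpillars and $G[D_P]$ is claw-free then yields that $G[D_P]$ is a path. You should supply this exchange/structure argument; without it the ``shortest path'' conclusion is unsupported.
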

\begin{proof}
The key to this proof is the next simple claim regarding connected dominating sets.

\smallskip

\claim{$\star$: A subset $S$ of $G$'s vertices is a connected dominating set if and only if for every edge $xy$ of $T$, $S$ contains at least one vertex of $G_x \cap G_y$. }
\begin{claimproof}
\noindent\textit{Proof of Claim~$\star$.}

$\Leftarrow$  
First, since $G$ is not a clique, $T$ contains at least one edge. 
Therefore, since every node $x$ of $T$ is incident to some edge, $S$ contains a vertex of $G_x$ for every node $x$ of $T$, i.e., $S$ dominates $G$. 
Second, we have that $\bigcup_{v \in S} P_v$ is connected (since it is equal to $T$). 
In particular, $G[S]$ is connected.

$\Rightarrow$ 
Suppose that there is an edge $xy$ of $T$ where for every $v \in S$, $xy$ does not belong to~$P_v$. 
Let $T'$ and $T''$ be the two subtrees of $T$ obtained by deleting the edge $x$ from~$T$. 
Now, since $S$ is dominating, it contains a vertex $u$ whose model (path) in $T$ is contained in~$T'$ and a vertex $v$ whose model (path) in $T$ is contained in~$T''$. 
However, every $(u,v)$-path in $G$ must contain a vertex of $G_x \cap G_y$: This contradicts $G[S]$ being connected. 
\end{claimproof}
 
\medskip  
\noindent 
The key consequence of Claim~$\star$ is that an MCDS is, equivalently, a smallest set $S$ of vertices where every edge of $T$ is included in the path of some vertex in $S$. 
In particular, to characterize the MCDSs, it suffices to independently consider each subtree of $T$ as in the edge-partition with respect to terminals stated in Corollary~\ref{cor:nc-path-tree.structure} (and, also as slightly more finely enumerated in the statement of this lemma). 
With this in mind, we proceed with the proof for each item of the enumeration separately. 

\medskip 

\begin{claimproof}
\textbf{1.:}  
Here we have to cover the three edges incident to a junction $x$. 
Clearly, doing so requires at least two paths arising from non-twin vertices. 
Moreover, the only vertices whose paths contain edges incident to $x$, are those in $G_x$. 
Thus, we must pick two non-twin vertices of $G_x$, and, since $x$ is a junction, by doing so we indeed obtain two paths that cover all three edges. 
\end{claimproof}

\begin{claimproof}
\textbf{2.:}  
Here, we just need to ensure the edge $xy$ is covered. 
Since, $G$ is connected, there must be at least one vertex whose path includes this edge. 
In particular, $G_x \cap G_y \neq \emptyset$ and it suffices to just take any such vertex. 
\end{claimproof}

\begin{claimproof}
\textbf{3.:} 
Finally, we arrive at the somewhat non-trivial case concerning a path $P = (z_1, \ldots, z_k)$ in $T$ 
where $z_1$ and $z_k$ are terminals, and each $z_i$ ($i \in \{2, \ldots, k-1\}$) is mixed (and as such has degree 2). 

Let $u$ be a vertex in $G_{z_1} \setminus G_{z_2}$, and let $v$ be a vertex in $G_{z_k} \setminus G_{z_{k-1}}$. 
In claims~(a) and~(b) below, we establish that (a) for any induced $(u,v)$-path in $G$, the models (paths) of the inner vertices cover the edges of $P$ (in $T$); and, (b) that in any MCDS, the vertices whose models (paths) include edges of $P$ constitute the inner vertices of an induced $(u,v)$-path. 
Together, these claims indeed imply Property~\ref{prop:structure.mixed-path} of this lemma since shortest paths are the smallest induced paths.

\claim{(a) For any induced $(u,v)$-path $Q = (u, w_1, \ldots, w_\ell, v)$ in $G$, the models (paths) of the inner vertices $(w_i, i \in \{1, \ldots, k\})$ cover precisely the edges of $P$ (in $T$).}

First, observe that $u$ and $v$ are not adjacent, i.e., $\ell \geq 1$ and $Q$ contains inner vertices. 
Second, observe that, for any inner vertex $w_i$ ($i \in \{1, \ldots, \ell\}$ of $Q$, $P_{w_i}$ is a subpath of~$P$ since $Q$ is an induced path and $z_1$ and $z_k$ are terminals.  
Finally, similarly to the proof of~$\Rightarrow$ for Claim~$\star$, since $Q$ is connected and includes $u$ and $v$ where $P_u$ and $P_v$ are separated by the path $P$ in $T$, we have that $\bigcup_{i=1}^{k} P_{w_i} \supseteq P$; thus, $\bigcup_{i=1}^{k} P_{w_i} = P$, completing the proof of this claim.

\claim{(b) In any MCDS $D$ of $G$, the vertices whose models (paths) include edges of $P$ constitute the inner vertices of an induced $(u,v)$-path.}

Let $D_P$ be the subset of $D$ where $w \in D_P$ if and only if $P_w$ contains an edge of $P$. 
By Claim~$\star$, $D_P \neq \emptyset$ and $P \supseteq \bigcup_{w \in D_P} P_w$. 
Moreover, since $z_1$ and $z_k$ are terminals, the union $\bigcup_{w \in D_P} P_w$ is contained in $P$, i.e., $P = \bigcup_{w \in D_P} P_w$.
In particular, $D_P$ induces a proper interval subgraph of $G$. 

We now show that $D_P$ induces a path $w_1, \ldots, w_\ell$ in $G$ such that $w_1$ is adjacent to $v$ and $w_\ell$ is adjacent to $u$. 

We first establish the adjacency to $u$ and $v$. 
By Claim~$\star$, $D_P \cap G_{z_1} \cap G_{z_2} \neq \emptyset$, and we pick $w_1$ as any vertex in $D_P \cap G_{z_1} \cap G_{z_2}$. 
Since $w_1 \in G_{z_1}$, we indeed have that $uw_1$ is an edge. 
Notice that, if there is a vertex $w \in D_P \cap G_{z_1} \cap G_{z_2}$ such that $w \neq w_1$, then either $P_w \supseteq P_{w_1}$ or $P_{w_1} \supseteq P_w$, i.e., this would contradict the fact that $D$ is an MCDS.
Thus, $w_1$ is the only vertex of $D$ in $G_{z_1} \cap G_{z_2}$. 
(Symmetrically, we have a vertex $w_\ell$ adjacent to $v$ such that $\{w_\ell\} = D_P \cap G_{z_{k-1}} \cap G_{z_k}$.)

To establish that $D_P$ really induces a path, we remark that it suffices to show that $G[D_P]$ is triangle-free. 
In particular, it is known~\cite{Eckhoff93} that a triangle-free interval graph is a caterpillar. 
Thus, since $G[D_P]$ is a proper interval graph (and as such claw-free---recall Theorem~\ref{thm:nc-path-path.fisc}), if $G[D_P]$ is triangle-free, then it is indeed a path. 
Moreover, the $(w_1,w_\ell)$ subpath of $G[D_P]$ constitutes a set of inner vertices of an induced $(u,v)$-path in $G$, and thus, by Claim~(a) and the minimality of $D$, $G[D_P]$ is precisely this subpath. 

We now establish that $G[D_P]$ is triangle-free to show that it is indeed a path, completing the proof of Claim~(b). 
Suppose (for a contradiction) that $G[D_P]$ contains a triangle $w,w',w''$. 
By the Helly property of subtrees of a tree, we have that there is $z_j$ such that $z_j \in P_w \cap P_{w'} \cap P_{w''}$. 
However, since each of $P_w$, $P_{w'}$, and $P_{w''}$ is a subpath of $P$, without loss of generality, we have that $P_w \subseteq P_{w'} \cup P_{w''}$. 
This contradicts the minimality of $D$, and establishes that $G[D_P]$ is indeed triangle-free. 
Therefore, $G[D_P]$ is indeed a path and we have established Claim~(b). 

\noindent Finally, as remarked above, combining Claims (a) and (b) establishes Property~\ref{prop:mcds-mixed-path}. 
\end{claimproof}

\end{proof}

Based on the above lemma we will now prove Theorem~\ref{thm:mcds}, establishing our linear time algorithm for the MCDS problem. 

\begin{proof}[Proof of Theorem~\ref{thm:mcds}]
In essence, this is just describing how to efficiently determine the vertices of an MCDS as described by the properties established in Lemma~\ref{lem:mcds-structure}. 
First, as in our certifying recognition algorithm (in the proof of Theorem~\ref{thm:nc-path-tree-recog}), we construct the clique NC-path-tree model $(\{P_v\}_{v \in V(G)}, T)$ of $G$ and mark each node as mixed, terminal, or junction. 
This allows us to partition $T$ according to its terminals as in Corollary~\ref{cor:nc-path-tree.structure}. 

Now, as justified by Property~\ref{prop:mcds-junction}, for each junction, we simply pick any two non-twin vertices. 
Let $D_1$ be this set of vertices. 

Similarly, as justified by Property~\ref{prop:mcds:terminal-termainal}, for each edge connecting two terminals, we simply pick any vertex whose path contains this edge (actually, the path of any such vertex will be precisely this edge). 
Let $D_2$ be this set of vertices. 

As justified by Property~\ref{prop:mcds-mixed-path}, for each path of mixed nodes connecting two terminals in $T$, we will compute an appropriate shortest path in $G$. 
Of course, here, to obtain a linear running time, we have to be a bit careful. 
Let $(z_1, \ldots, z_k)$ be a path in $T$ where $z_1$ and $z_k$ are terminals and for each $i \in \{2, \ldots, k-1\}$, $z_i$ is mixed (and, as such, has degree 2).
Here, we again construct the auxiliary graph $G'$ (as described in the proof of Theorem~\ref{thm:nc-path-tree-recog}) for this path $(z_1, \ldots, z_k)$, and compute a shortest path between the special vertices $u_1$ and $u_k$. 
Note that, since $G'$ is an interval graph, such a shortest path can be computed in linear time~\cite{AtallahCL95}. 
After doing so, we simply keep the inner vertices of such a path for our MCDS.
Moreover, as remarked before, the total size of all of these $G'$ graphs is linear in the size of $G$, thus we can compute a shortest path for each such $G'$ graph in linear time in total. 
This gives us the set $D_3$ consisting of the inner vertices from this collection of paths. 

Finally, we output the set $D_1 \cup D_2 \cup D_3$ as our MCDS. 
\end{proof}

\section{Hamiltonian Cycles and Minimum Leaf Spanning Trees}
\label{sec:ham}

As mentioned earlier, the HC and HP problems are \cNP-complete on DPT graphs and split graphs.
They are also \cNP-complete on line graphs of biparite graphs, i.e., (claw, diamond, odd-hole)-free graphs~\cite{LaiW93}, where the \emph{diamond} is the graph obtained by removing one edge from $K_4$.
In contrast, we show that, like proper interval graphs~\cite{Bertossi83}, 2-connectivity suffices for Hamiltonicity in NC-path-tree graphs, but additionally, every \emph{tracing} of a clique NC-path-tree model provides a distinct HC of its graph.
We similarly characterize the presence of an~HP via an obvious necessary condition in Theorem~\ref{thm:hamilton-path} below. 
This characterization of HPs directly allows us to characterize the number of leaves in a minimum-leaf spanning tree, see Corollary~\ref{cor:min-leaf-spanning-tree}, and ultimately provide a linear time algorithm for the minimum-leaf spanning tree problem on NC-path-tree graphs. 

\begin{theorem}\label{thm:hamiltonicity}
An NC-path-tree graph $G$ has a Hamiltonian cycle if and only if it is 2-connected and has at least three vertices.
Also, for each plane layout of $G$'s clique NC-path-tree model $T$, a distinct a Hamiltonian cycle of $G$ can be constructed in linear time. 
\end{theorem}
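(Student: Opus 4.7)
The forward direction is routine: a Hamiltonian cycle needs at least three vertices, and deleting any single vertex leaves a Hamiltonian path, which is connected, so $G$ is $2$-connected.

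For the backward direction, I would first invoke the standard characterization of $2$-connectivity in chordal graphs via clique trees: $G$ is $2$-connected if and only if $|G_x \cap G_y| \geq 2$ for every edge $xy$ of the clique NC-path-tree model $T$, since $G_x \cap G_y$ is precisely a minimal vertex separator of $G$ and any singleton $G_x \cap G_y$ would yield a cut vertex. Assuming this, I would fix the given plane layout of $T$ (i.e., a cyclic ordering of edges at each node) and use it to weave an HC along a boundary walk of $T$.

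The main construction would split $T$ along the edge-partition from Corollary~\ref{cor:nc-path-tree.structure}. On each mixed-path piece $(z_1, \ldots, z_k)$ with terminal endpoints and degree-$2$ mixed interior, the vertices of $G$ whose paths lie inside the piece induce (essentially) a $2$-connected proper interval graph, and a Bertossi-style construction~\cite{Bertossi83} produces a Hamiltonian path in this subgraph from a chosen entry vertex in $G_{z_1} \cap G_{z_2}$ to a chosen exit vertex in $G_{z_{k-1}} \cap G_{z_k}$. On each junction piece at a junction $x$ with terminal neighbors $y_1, y_2, y_3$ taken in plane-cyclic order, Observation~\ref{obs:path-tree.nets.suns}.1 tells me that the twin-free portion of $G_x$ is a triangle $\{v_{12}, v_{13}, v_{23}\}$, which I can traverse in the order dictated by the plane layout, folding any twins into the appropriate block. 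I would then chain these piece-traversals in the order dictated by the boundary walk of the plane-embedded $T$.

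The principal obstacle is the stitching at shared terminals: at each terminal $x$ several pieces meet, and the HC must hand off from one piece's exit vertex to the next piece's entry vertex, which requires coherent choices from the various $G_x \cap G_y$; the condition $|G_x \cap G_y| \geq 2$ guarantees enough freedom, but the bookkeeping to avoid double-visits and omissions is what requires care. Distinctness of the resulting HC across plane layouts follows because the cyclic order of pieces incident to each node of degree $\geq 3$ is inherited from the layout, so different layouts produce different cyclic sequences of terminals (and hence of vertices) in the HC. Finally, to achieve linear total time, I would re-use the auxiliary-interval-graph trick from the proof of Theorem~\ref{thm:nc-path-tree-recog} so that the Bertossi constructions for mixed-path pieces and the triangle handling at junctions each run in linear time in their piece's size, summing to $O(n+m)$.
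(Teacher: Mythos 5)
Your setup is on the right track --- the forward direction, the separator characterization of $2$-connectivity ($|G_x\cap G_y|\geq 2$ for every edge $xy$ of $T$), the decomposition of $T$ into junction pieces and mixed-path pieces via Corollary~\ref{cor:nc-path-tree.structure}, the triangle of twin classes at a junction, and the reuse of the auxiliary graphs $G'$ for linear time are all ingredients of the paper's proof. But the core construction has a structural flaw that the flagged ``stitching bookkeeping'' cannot repair: you traverse each mixed-path piece \emph{once}, with a single Hamiltonian path of the piece from its $z_1$-end to its $z_k$-end, and then chain the pieces along the boundary walk of the plane-embedded $T$. Every piece is a bridge in the tree of pieces, so the boundary walk of a plane tree crosses each piece \emph{twice}; equivalently, a closed walk using each piece exactly once would be an Eulerian circuit of a tree, which does not exist once the tree branches. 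Concretely, take a piece $P=(z_1,\dots,z_k)$ whose both endpoints are terminals of degree at least $3$ in $T$: the separator structure forces any Hamiltonian cycle to cross $P$ twice with \emph{disjoint} vertex sets, so after your single path consumes all of $P$'s vertices on the outbound crossing there is nothing left for the return, and the cycle cannot close. The issue is not a shortage of handoff vertices in $G_x\cap G_y$; it is that one path per piece is the wrong shape.

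The paper's proof supplies exactly the missing idea. From Bertossi's result it extracts, for each $2$-connected proper-interval piece (via the auxiliary graph $G'$, with a special variant when $k=2$), \emph{two} internally disjoint paths that together cover all of the piece's vertices, one path for each of the two traversals of that piece. The gluing is then formalized as an Eulerian circuit of a multigraph $M$ on the terminals of $T$: each mixed-path or terminal--terminal piece contributes a doubled edge between its end terminals, each junction contributes a triangle on its three terminal neighbors (one edge per twin class $A,A',A''$), and vertices private to a single $G_x$ contribute a self-loop at $x$. All degrees in $M$ are even and $M$ is connected, so $M$ is Eulerian, and each Eulerian circuit --- equivalently, each plane layout of $T$ --- yields a distinct Hamiltonian cycle. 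To repair your argument you would have to replace ``one Hamiltonian path per piece'' with this two-disjoint-paths-per-piece scheme and make the parity argument explicit; at that point you have reconstructed the paper's proof.
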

\begin{proof}
We build on the fact that 2-connected proper interval graphs are not only Hamiltonian but have an HC with quite special structure, established in~\cite{Bertossi83}, and described as follows.
Consider a proper interval graph $G$. 
Let $x_1, \ldots, x_k$ be the maximal cliques $G$ ordered according to the clique NC-path-path model of $G$. 
Further, let $u_1$ be a vertex of $G_{x_1} \setminus G_{x_2}$ and let $u_k$ be a vertex of $G_{x_k} \setminus G_{x_{k-1}}$. 
When $G$ is 2-connected there are internally disjoint $(u_1,u_k)$-paths $Q_1$ and $Q_2$ such that every vertex of $G$ belongs to either $Q_1$ or $Q_2$. 
Importantly for our claimed time bound is that these two paths can be found in linear time~\cite{Bertossi83}.
In essence, we will see (through an auxiliary multigraph $X$ constructed below) that such paths also occur in 2-connected NC-path-tree graphs by considering the proper interval graphs occurring between terminals. 

Now consider a 2-connected NC-path-tree graph $G$ and its clique NC-path-tree model $T$. 
Recall that, as we noted when designing our certifying algorithm for NC-path-tree graphs, for a path $x_1, \ldots, x_k$ in $T$ where $x_1$ and $x_k$ are terminals and each inner node is mixed (and consequently of degree 2), the graph $G[\bigcup_{i=1}^k G_{x_i}]$ is a proper interval graph. 
Moreover, since $G$ is 2-connected, each such subgraph is also 2-connected. 
Additionally, the graph $G'$ created from $G[\bigcup_{i=1}^k G_{x_i}]$ as before is also 2-connected. 
However, there is one special case where we use a slightly different auxiliary graph (otherwise we simply use the $G'$ defined before). 
When $k=2$, the graph $G'$ is the clique $G_{x_1} \cap G_{x_k}$ together with new vertices $u_1$ and $u_k$ where $N(u_1) = N(u_k) = G_{x_1} \cap G_{x_k}$. 
Now, it is easy to see that each such graph $G'$ is 2-connected and proper interval, and since $u_1$ and $u_k$ are not adjacent, we have two non-empty disjoint paths that both start with a vertex of $G_{x_1} \cap G_{x_2}$, and end with a vertex of $G_{x_{k-1}} \cap G_{x_k}$. 
Moreover, as remarked above these two paths can be built in linear time. 

We now consider the case when a neighbor $y$ of $x$ is a junction before completing our construction of the HC. 
Let the other two neighbors of the junction $y$ be $x'$ and $x''$. 
Due to the fact that $x,x',x''$ are all terminals, the vertices of $G_y$ form three equivalence classes $A,A',A''$ of twins, where:
\begin{itemize}
\item each vertex in $A$ is represented by the path $x,y,x'$,
\item each vertex in $A'$ is represented by the path $x',y,x''$, and 
\item each vertex in $A''$ is represented by the path $x'',y,x$. 
\end{itemize}
Namely, using $A, A', A''$ we can ``traverse'' $T$ from $x$ to $x'$, from $x'$ to $x''$, and from $x''$ back to $x$. 
Due to the simple structure here, and the fact that partitioning into equivalence classes of twins is a linear time task, it is easy to construct the three steps of such a  traversal ``around'' all junctions in linear time in total.

Based on the above observations, we can now build our HCs. 
The intuition here is to consider the tree $T$ to be drawn crossing-free in the plane, and trace the outline of $T$ terminal-to-terminal by using the paths guaranteed by the above arguments. 
We will encode the family of all such traces by a multigraph $M$ formed on the terminals of $T$ where each Eulerian tour of $X$ will correspond to a distinct HC of $G$.
Namely, for each terminal $x$, and each neighbor $y$ of $x$ in $T$:
\begin{itemize}[noitemsep,topsep=0pt]
\item if $y$ is a terminal, then in $M$, $x$ and $y$  are connected by two edges (representing the two paths present in the corresponding $G'$).
\item if $y$ is a mixed node and $z$ is the terminal so that $y$ occurs on the $(x,z)$-path in $T$, then, in $M$, $x$ and $z$ are connected by two edges (representing the two paths present in the corresponding $G'$).
\item if $y$ is a junction and $x'$ and $x''$ are its two other neighbors, then in $M$, we have the edges $xx'$ and $xx''$.  
\item finally, if $G_x$ contains vertices that do not belong to any other $G_{x'}$ (e.g., when $x$ is a leaf of $T$), we also add a self-loop on $x$ in $M$ and map to this self-loop the vertices of $G_x \setminus (\bigcup_{x' \in N(x)}G_{x'})$. 
\end{itemize} 
We note the following properties of $M$ to complete the proof. 
The edges of $M$ partition the vertices of $G$ and each edge $xy$ corresponds to a path in $G$ where one end vertex  belongs $G_x$ and the other end vertex belongs to $G_y$. 
Furthermore, $X$ is Eulerian, each Eulerian cycle $C$ provides an HC, and $C$ describes a plane layout of $T$, i.e., a cyclic order of the edges around each node of $T$ so that $C$ traces the outline of this plane layout of $T$. 
Note that, each such plane layout will often arise from multiple Eulerian cycles in $M$, but no two distinct layouts arise from the same cycle. 
\end{proof}

We now turn to HPs, and ultimately to minimum-leaf spanning trees. 
Note that, an \emph{$\ell$-leaf spanning tree} of a graph $G$ is simply a spanning tree of $G$ with exactly $\ell$ leaves, and a \emph{minimum-leaf spanning tree} is a spanning tree having the fewest leaves. 
Clearly, checking for an HP is a special case of finding a minimum-leaf spanning tree. 
A natural lower bound on the number of leaves in a minimum-leaf spanning tree comes from looking at the block-cutpoint tree (defined next). 

The \emph{block-cutpoint tree} $BC(G)$ of a graph $G$ contains a node for each cut-vertex of $G$, a node for each maximal 2-connected subgraph (\emph{block}) of $G$, and its edge set is $\{ cB$ : $c$ is a cut-vertex, and $B$ is a block of $G$ containing $c$\}. 
It is well-known that $BC(G)$ can be computed in linear time~\cite{Hopcroft:1973}, and is indeed a tree. 
Clearly, if $G$ has an HP, $BC(G)$ is a path.
In the next theorem, we show the converse is also true in NC-path-tree graphs, and further below we generalize this to $\ell$-leaf spanning trees. 
The main idea is to observe where the cut-vertices occur in the model and then reuse our Eulerian structure $M$ from the previous proof. 
More generally, if $G$ has a spanning tree with at most $\ell$ leaves, then $BC(G)$ also can have at most $\ell$ leaves. 
Here, we observe that once we have the characterization for the presence of an HP, the converse of this easily follows, see Lemma~\ref{lem:hp-to-min-leaf}. 
In particular, it holds for NC-path-tree graphs, see Corollary~\ref{cor:min-leaf-spanning-tree}. 

\begin{theorem}\label{thm:hamilton-path}
An NC-path-tree graph $G$ contains a Hamiltonian path if and only if its block-cutpoint tree is a path. 
Moreover, when the block-cutpoint tree of $G$ is a path, a Hamiltonian path can be produced in linear time. 
\end{theorem}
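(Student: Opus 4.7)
The \emph{only if} direction is immediate. If $G$ has a Hamiltonian path $v_1, \ldots, v_n$, then each cut-vertex is visited exactly once and is the unique gateway between its incident blocks, so the restriction of the path to each block is a contiguous subpath. This induces a linear order on the blocks in which consecutive ones share exactly one cut-vertex, forcing $BC(G)$ to be a path.

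For the sufficiency direction, my plan is to extend the Eulerian multigraph $M$ construction from the proof of Theorem~\ref{thm:hamiltonicity} beyond the 2-connected case. I would first locate the cut-vertices of $G$ in the clique NC-path-tree model $T$: a vertex $v$ is a cut-vertex of $G$ if and only if some edge $xy$ of $T$ satisfies $G_x \cap G_y = \{v\}$. When $BC(G)$ is a path, these ``cut-edges'' of $T$ lie along a path in $T$, and deleting them from $T$ yields subtrees $T_1, \ldots, T_t$ that correspond bijectively, in the linear order dictated by $BC(G)$, to the blocks $B_1, \ldots, B_t$. The boundary node of $T_i$ separating it from $T_{i+1}$ is a terminal (by Lemma~\ref{lem:nc-path-tree.nodes}) and it contains the cut-vertex $c_i$ shared between $B_i$ and $B_{i+1}$.

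The core construction then builds, for each block $B_i$, the Eulerian multigraph $M_i$ from Theorem~\ref{thm:hamiltonicity} on the terminals of $T_i$, and modifies $M_i$ so that the terminals associated with $c_{i-1}$ and $c_i$ become its unique odd-degree vertices (by removing one of the doubled edges incident to each such boundary terminal). An Eulerian path in the modified $M_i$ then traces out a Hamiltonian path of $B_i$ starting at $c_{i-1}$ and ending at $c_i$. Concatenating these Hamiltonian paths at the cut-vertices $c_1, \ldots, c_{t-1}$ yields the desired Hamiltonian path of $G$; for the end blocks $B_1$ and $B_t$ only one cut-vertex endpoint is required, and the other can be chosen arbitrarily.

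The main obstacle is verifying that the modified $M_i$ really does admit such an Eulerian path and that the corresponding traversal covers every vertex of $B_i$ exactly once with the specified endpoints. In the ``proper interval'' portion of $T_i$ between two consecutive terminals this reduces to the classical fact that a 2-connected proper interval graph has a Hamiltonian path between its two extremal vertices~\cite{Bertossi83}; at junctions it reduces to the twin-based cyclic traversal used in Theorem~\ref{thm:hamiltonicity}. The linear time algorithm follows directly by running the construction end-to-end: $BC(G)$ is computable in linear time~\cite{Hopcroft:1973}, the clique NC-path-tree model and the identification of cut-edges are linear (Theorem~\ref{thm:nc-path-tree-recog}), and the Hamiltonian path within each proper-interval segment is linear in the segment's size~\cite{Bertossi83}; since the pieces partition $G$ (as in the MCDS algorithm of Section~\ref{sec:mds}), the total running time is $O(n+m)$.
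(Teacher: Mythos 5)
Your overall strategy is the same as the paper's: locate the cut-vertices in the clique model, decompose $G$ into blocks arranged along a path, build a Hamiltonian path of each block between its cut-vertices, and concatenate. The only-if direction and the linear-time accounting are fine. However, the core step --- modifying $M_i$ by ``removing one of the doubled edges incident to each boundary terminal'' and taking an Eulerian path --- fails as stated, for two reasons. First, the parity is wrong in general: if the terminal-to-terminal spine of $T_i$ joining the two boundary terminals is $x_1,\dots,x_k$ with $k\ge 3$ terminals, then removing one copy of a doubled edge at $x_1$ and one at $x_k$ also flips the parity of the \emph{other} endpoints of those two edges, so $M_i$ acquires four odd-degree vertices and admits no Eulerian trail. (Moreover, if a boundary terminal neighbors a junction, its incident edges in $M_i$ are not doubled at all, so the prescription is not even defined there.) Second, even where the parity happens to work out, the two parallel edges of a segment encode two vertex-disjoint paths whose \emph{union} covers the vertices of that segment; deleting one of them means the Eulerian trail never visits the vertices carried by the deleted path, so the resulting closed walk in $G$ is not Hamiltonian.

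The repair is essentially what the paper does: along the \emph{entire} spine from one boundary terminal to the other, replace each doubled pair by a \emph{single} edge whose semantics is a Hamiltonian path of the corresponding proper-interval segment between its two extremal cliques (this exists whenever the segment is connected, by the result of~\cite{Bertossi83} you cite, and does not need 2-connectivity), while retaining the doubled/cyclic structure only on the subtrees hanging off the spine and at junctions. This makes exactly the two boundary terminals odd and guarantees every vertex of every spine segment is visited once. You additionally need to pin the trail's endpoints to the specific cut-vertices $c_{i-1}$ and $c_i$ (whose models inside the block are single terminal nodes), and you should prove, rather than attribute to Lemma~\ref{lem:nc-path-tree.nodes}, that a cut-vertex's model is a single edge of $T$ whose two endpoints are terminals --- the paper establishes this as a separate claim using clique maximality together with the non-crossing property, and it is the statement that makes the whole block-by-block decomposition of $T$ legitimate.
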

\begin{proof}%
As noted above, it suffices to prove the $\Leftarrow$ direction. 
Let $G$ be an NC-path-tree graph and let $(\{P_v\}_{v \in V(G)},T)$ be its clique NC-path-tree model. 
Recall that, by Theorem~\ref{thm:hamiltonicity}, if $G$ has no cut-vertices, it has an HC, and thus also an HP. 
Therefore, we suppose $G$ contains a cut-vertex $v$. 
Note that $P_v$ must contain an edge $xy$ such that, for every vertex $u$ distinct from $v$, $xy$ is not an edge of $P_u$ (otherwise, $v$ would not be a cut-vertex). 
The next claim is the key to the proof.
\claim{: $P_v$ is precisely the edge $xy$ and both $x$ and $y$ are terminals.}
\begin{claimproof}
Note that $P_v$ cannot contain any junctions since the vertices whose paths use junctions cannot be cut-vertices (recall that, by Observation~\ref{obs:path-tree.nets.suns}, if $P_v$ contains a junction, $v$ is a central vertex of a 3-sun such that the two other central vertices are adjacent and dominate $N(v)$). 
Suppose that $y$ is not an end-node of $P_v$, and let $z$ be the neighbor of $y$ distinct from $x$ (note: $y$ is mixed and as such has degree 2 by Lemma~\ref{lem:nc-path-tree.nodes}.1). 
Now, since $G_z$ and $G_y$ are maximal cliques, we have a vertex $u \in G_y \setminus G_z$, but $u$ cannot belong to $G_x$ since $u \neq v$ and $P_v$ is the only path containing $xy$. Thus, $P_u = y$ and $P_u$ and $P_v$ cross. 
Furthermore, since $P_v$ is the only path that uses $xy$, both $x$ and $y$ must be terminals. 
\end{claimproof}

Note that, since $BC(G)$ is a path (and $G$ is not 2-connected), it consists of the two \emph{end} blocks (containing a single cut-vertex each) and (possibly) some \emph{inner} blocks containing exactly two cut-vertices each. 
Clearly, an HP must consist of one path in each block where, in the two end blocks, the cut-vertex is an end vertex of the path, and, in each inner block, the two cut-vertices are the two end vertices of the path. 

Let $B$ be an end block of $G$, i.e., a leaf of the block-cutpoint tree $BC(G)$ which contains one cut-vertex $v$.
Since $B$ is a 2-connected induced subgraph of $G$, $B$ has a Hamiltonian cycle $C_B$ by Theorem~\ref{thm:hamiltonicity}. 
So, to obtain a Hamiltonian path that ends at $v$, we just delete one edge incident to $v$ from $C_B$. 

To complete the proof, we will now argue that each inner block $B$ of $G$ containing two cut-vertices $v$ and $v'$ has a Hamiltonian path that connects $v$ to $v'$ within $B$. 

By the claim above, in the clique NC-path-tree model $(\{P^B_v\}_{v \in V(B)}, T^B)$ of $B$, each of $P^B_{v}$ and $P^B_{v'}$ is a single terminal node. Let these nodes be $x$ and $x'$ (respectively) in $T^B$.
Consider the path $x = x_1, x_2, \ldots, x_k = x'$ in $T^B$. 
Now, consider the Eulerian multigraph $M$ as in the proof of Theorem~\ref{thm:hamiltonicity}. 
Note that, since $x_1$ is a terminal, we can use $M$ to construct a path $Q_1$ that starts with $v$ and ends with a vertex of $G_{x_1} \setminus G_{x_2}$ and visits precisely the vertices in the connected component of $B \setminus (G_{x_1} \cap G_{x_2}$ that contains $v$. 
The path $Q_k$ is defined analogously.
Similarly, for each terminal $x_i$ $(i \in \{2, \ldots, k-1\})$, we can use $M$ to craft a path $Q_i$ that visits precisely the vertices whose paths occur strictly within the subtree of $T^B \setminus \{x_{i-1}, x_{i+1}\}$ that contains $x_i$. 
Moreover, this path will start and end with vertices whose paths contain $x_i$.
When $x_i$ is a junction, let $x'_i$ be the terminal that neighbors $x_i$ and is distinct from $x_{i-1}$ and $x_{i+1}$. 
Similarly to the case of $x_1$, we note that there is a path $Q'_i$ that visits all the vertices $U_i$ ``hanging below'' $x'_i$ and starts and ends with a vertex of $x'_i$. 
Additionally, due to the three equivalence classes of twins whose paths contain the junction $x_i$, we can extend this path $Q'_i$ to a path $Q_{i-1,i+1}$ that starts in a vertex of $G_{x_{i-1}} \cap G_{x_i}$, ends in a vertex of $G_{x_i} \cap G_{x_{i+1}}$, and visits every vertex of $U_i \cup G_{x_i}$.

Finally, consider two terminals $x_i$ and $x_j$ ($i < j$) where, for each $l \in \{i+1,\ldots, j-1\}$, $x_l$ is mixed. 
As in the proof of Theorem~\ref{thm:hamiltonicity}, we again consider the auxiliary graph $G'$ corresponding to this path. 
Here, we instead need a Hamiltonian path in $G'$ that starts and ends in our special vertices $u_1$ and $u_k$. 
Fortunately, it is known~\cite{Bertossi83}, that such a path does exist and actually only requires that $G'$ is connected. 
Namely, we have the path $Q_{i,j}$ which starts in a vertex of $G_{x_i} \cap G_{x_{i+1}}$, ends in a vertex of $G_{x_{k-1}} \cap G_{x_k}$, and visits every vertex of $\bigcup_{l=i+1}^{k-1}G_{x_l}$. 

Thus, to form a desired Hamiltonian path $Q_B$ of $B$ that starts with $v$ and ends with $v'$, we simply concatenate the paths $Q_1, Q_{1,i_1}, Q_{i_1}, Q_{i_1,i_2}, Q_{i_2}, \ldots, Q_{i_t}, Q_{i_t,k}, Q_k$ where $x_{i_1}, \ldots, x_{i_t}$ are the terminals that occur between $x_1$ and $x_k$. 
In particular, by forming such a Hamiltonian path $Q_B$ for each inner block $B$ of $G$, we are done.

We conclude by remarking that the construction of an HP here can be completed in linear time. 
In particular, it suffices to describe how to obtain linear time on each block separately. 
For each end block $B$, we simply invoke the HC algorithm (leading to linear time in the the size of $B$). 
For each inner block, the paths $Q_1, Q_{i_1}, \ldots, Q_{i_t}$, and $Q_k$ can similarly be constructed by invoking the HC algorithm, leading to linear time in the size of $B$ in total. 
Each of the other paths $Q_{1, i_1}, Q_{i_1, i_2}, \ldots, Q_{i+{t-1}, i_t},$ and $Q_{i_t, k}$ can also be constructed in linear time by a simple greedy algorithm~\cite{Bertossi83}. 
Thus since these paths, which we concatenate in order to make $Q_B$, are constructed from edge-disjoint induced subgraphs of $B$, the total time to construct $Q_B$ is also linear in the size of $B$.
\end{proof}

We now show how Theorem~\ref{thm:hamilton-path} can be generalized to minimum leaf spanning trees. 
While we expect that the following straightforward lemma has been observed before, we could not find an explicit proof of it, and so we include it here.

\begin{lemma}
\label{lem:hp-to-min-leaf}
For any graph class $\mathcal{G}$ closed under taking induced subgraphs, if every graph $G\in \mathcal{G}$ whose block-cutpoint tree is a path has a Hamiltonian path, then every graph $G \in \mathcal{G}$ whose block-cutpoint tree has $\ell$ leaves ($\ell \geq 2$) has a spanning tree with exactly $\ell$ leaves. 
\end{lemma}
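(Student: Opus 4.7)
My plan is to construct a spanning tree of $G$ with exactly $\ell$ leaves by gluing together carefully chosen Hamiltonian paths of the blocks of $G$. Let $T = BC(G)$. For each block $B$ of $G$, I will produce an HP $S_B$ of $B$ with the following endpoint property: if $B$ is a leaf of $T$ with cut-vertex $c$, then $c$ is an endpoint of $S_B$; and if $B$ is an internal node of $T$, then both endpoints of $S_B$ are cut-vertices of $G$ lying in $V(B)$.

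To produce such an $S_B$, I apply the hypothesis not to $B$ itself but to a suitably larger induced subgraph of $G$ whose block-cutpoint tree is a path. For an internal block $B$, pick any two cut-vertices $c, c'$ of $G$ lying in $B$ and choose leaves $L_c, L_{c'}$ of $T$ in the two components of $T - B$ containing $c$ and $c'$ respectively (each such component has at least two nodes, and so has a leaf distinct from its root). Let $H$ be the induced subgraph of $G$ on the union of vertex sets of the blocks on the $T$-path from $L_c$ to $L_{c'}$; then $BC(H)$ equals this path, so the hypothesis hands us a Hamiltonian path $P$ of $H$. Any such $P$ must have its endpoints in the end blocks $L_c, L_{c'}$ and must traverse each internal block on the path in one contiguous segment entering and leaving through its two cut-vertices (else some cut-vertex is revisited); hence the restriction of $P$ to $B$ is an HP of $B$ from $c$ to $c'$. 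The leaf-block case is analogous, using a $T$-path from $B$ through $c$ to any other leaf of $T$ (which exists since $\ell \geq 2$), and yields an HP of $B$ having $c$ as one endpoint and a non-cut-vertex as the other.

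Setting $S := \bigcup_{B} E(S_B)$, the identity $\sum_{B} |V(B)| = |V(G)| + b - 1$ (with $b$ the number of blocks, obtained by counting each cut-vertex $c$ with multiplicity $\deg_T(c)$) yields $|E(S)| = \sum_B (|V(B)| - 1) = |V(G)| - 1$; together with the obvious connectedness of $S$ inherited from the block structure, this makes $S$ a spanning tree of $G$. Every cut-vertex lies in at least two blocks, each contributing an incident edge, so has $S$-degree $\geq 2$ and is not a leaf; any non-cut-vertex $v$ lies in a unique block $B$ with $\deg_S(v) = \deg_{S_B}(v) \in \{1,2\}$, and is a leaf of $S$ iff it is an endpoint of $S_B$. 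By the endpoint property, each leaf block of $T$ contributes exactly one non-cut-vertex leaf to $S$ (the endpoint of $S_B$ other than the cut-vertex) and each internal block contributes none, giving $S$ exactly $\ell$ leaves.

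The main obstacle is producing block-HPs with the prescribed endpoints: the hypothesis by itself only gives some, unspecified, HP of each block, and in general there is no reason a 2-connected graph in $\mathcal{G}$ should have an HP between any chosen pair of vertices. The trick is to invoke the hypothesis on a larger path-BC-tree induced subgraph, whose HP is forced to traverse each intermediate block (in particular~$B$) between its two cut-vertices, thereby pinning down the endpoints we need.
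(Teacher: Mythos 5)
Your proof is correct, but it takes a genuinely different route from the paper's. The paper argues by induction on the number $\ell$ of leaves of $BC(G)$: it peels off a leaf block $B_1$ with cut-vertex $v$, extracts from the hypothesis a Hamiltonian path of $B_1$ ending at $v$ (by applying the hypothesis to the union of $B_1$ with a second leaf block and restricting), and glues this path onto an inductively obtained spanning tree with $\ell-1$ leaves of the graph $G$ minus $V(B_1)\setminus\{v\}$. You instead give a single global construction: for \emph{every} block you manufacture a Hamiltonian path with prescribed endpoints at its cut-vertices and take the union of these paths, verifying the spanning-tree property and the leaf count directly. The key trick is the same in both arguments --- since the hypothesis only supplies \emph{some} Hamiltonian path, one controls its endpoints by invoking the hypothesis on a larger induced subgraph whose block-cutpoint tree is a path and using the separator at each cut-vertex to force the path to traverse each block contiguously between its two ``ports''. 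The paper uses this only for the leaf block being removed, whereas you also apply it to internal blocks (and justify it correctly: a Hamiltonian path visits each cut-vertex once, hence crosses each separator exactly once). What your version buys is that all the bookkeeping is explicit and local --- the edge count $\sum_B(|V(B)|-1)=|V(G)|-1$ certifies the tree, and leaves are counted block by block --- so you never depend on where the leaves of an inductively obtained spanning tree happen to sit relative to the gluing vertex, nor on any two leaf blocks sharing a cut-vertex, points the paper's inductive step treats rather briskly. The price is a longer write-up; both approaches are sound.
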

\begin{proof}
We proceed by induction on the number of leaves in $BC(G)$. In the base case $BC(G)$ has two leaves, and the result follows trivially. 
So, suppose $BC(G)$ has $\ell \geq 3$ leaves. 
Let $Q$ be a path in $BC(G)$ that starts and ends in distinct leaf-blocks $B_1$ and $B_2$ where $B_1$ and $B_2$ share a cut vertex $v$, i.e., $Q$ is the path $(B_1,v,B_2)$. 
Further, let $G_Q$ be the subgraph of $G$ induced by the vertices occurring in the blocks on this path, and let $G'$ be the graph obtained by deleting every vertex of $B_1$ except $v$ from $G$. 
Now, by induction $G_Q$ has a Hamiltonian path $P$ and $G'$ has a spanning tree $T'$ with $\ell-1$ leaves. 
Moreover, since $v$ is a cut-vertex of $G_Q$, the path $P$ contains a subpath $P_1$ whose vertices are precisely the vertices of $B_1$ so that the vertex $v$ is an end vertex of $P_1$. 
Therefore, by gluing the path $P_1$ to $T'$ by identifying the occurrence of $v$ in both, we obtain a spanning tree $T$ of $G$ with precisely $\ell$ leaves. 
\end{proof}

\begin{corollary}
\label{cor:min-leaf-spanning-tree}
For any NC-path-tree graph $G$ that is not 2-connected (i.e., containing at least one cut-vertex), the number of leaves in a minimum-leaf spanning tree of $G$ is $\ell$ if and only if its block-cutpoint tree has exactly $\ell$ leaves. 
\end{corollary}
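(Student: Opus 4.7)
The plan is to separately establish $m(G) \leq b(G)$ and $m(G) \geq b(G)$, where $m(G)$ denotes the number of leaves in a minimum-leaf spanning tree of $G$ and $b(G)$ denotes the number of leaves of $BC(G)$. Combining these two inequalities yields $m(G) = b(G)$, which is the claimed equivalence.

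For the upper bound $m(G) \leq b(G)$, I would directly invoke Lemma~\ref{lem:hp-to-min-leaf}. The class of NC-path-tree graphs coincides with the claw-free chordal graphs by Theorem~\ref{thm:nc-path-tree-fisc} and is therefore closed under taking induced subgraphs. By Theorem~\ref{thm:hamilton-path}, every NC-path-tree graph whose block-cutpoint tree is a path admits a Hamiltonian path, so the hypothesis of Lemma~\ref{lem:hp-to-min-leaf} is met, and $G$ admits a spanning tree with exactly $b(G)$ leaves; in particular $m(G)\le b(G)$.

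For the matching lower bound $m(G) \geq b(G)$, which does not rely on the non-crossing structure but is a general property of spanning trees of graphs with cut-vertices, I would fix an arbitrary spanning tree $T$ of $G$ and show that each leaf-block of $G$ contributes a distinct leaf of $T$. Let $B$ be a leaf-block of $G$ with its unique cut-vertex $c$. Since every vertex of $V(B)\setminus\{c\}$ has all its $G$-neighbors inside $V(B)$, any component of $T-c$ meeting $V(B)\setminus\{c\}$ is entirely contained in $V(B)\setminus\{c\}$; at least one such component $S$ exists because $V(B)\setminus\{c\}\neq\emptyset$. If $S$ is a single vertex, that vertex is a leaf of $T$ (its only $T$-neighbor is $c$); otherwise $S$ is a subtree on at least two vertices, so it has at least two leaves in its own tree structure, and at most one of them can be joined to $c$ in $T$, leaving the other as a leaf of $T$. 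Since $V(B)\setminus\{c\}$ is disjoint across distinct leaf-blocks, summing over the $b(G)$ leaf-blocks of $G$ produces at least $b(G)$ distinct leaves of $T$.

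I do not foresee a substantive obstacle: the upper bound is an immediate invocation of Lemma~\ref{lem:hp-to-min-leaf} once the hereditary property and Theorem~\ref{thm:hamilton-path} are noted, and the lower bound amounts to the short case analysis on the size of $S$ above, the only subtlety being the single-vertex case.
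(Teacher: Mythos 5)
Your proposal is correct and follows essentially the same route as the paper: the upper bound $m(G)\le b(G)$ is obtained exactly as in the paper by combining Theorem~\ref{thm:hamilton-path} with Lemma~\ref{lem:hp-to-min-leaf} (using that NC-path-tree $=$ claw-free chordal is hereditary), while the lower bound $m(G)\ge b(G)$ is the direction the paper dismisses as ``clear.'' Your explicit leaf-block argument for that direction (each leaf-block minus its cut-vertex is a union of components of $T-c$ and must contribute a leaf of $T$) is sound, including the single-vertex case; it simply supplies detail the paper omits.
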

\begin{proof} 
$\Rightarrow$ Clearly, when the block-cutpoint tree has more than $\ell$ leaves, $G$ cannot have an $\ell$-leaf spanning tree. 

$\Leftarrow$ This follows directly from Theorem~\ref{thm:hamilton-path} and Lemma~\ref{lem:hp-to-min-leaf}.
\end{proof}

\section{Concluding Remarks}

In this paper we have studied intersection graph classes of non-crossing paths in trees. 
We have provided forbidden induced subgraph characterizations and recognition algorithms for the natural classes of such graphs. 
We have further studied and provided efficient algorithms for variations of domination and Hamiltonicity problems on intersection graphs of non-crossing paths in a tree. 

It might be interesting to investigate further algorithmic questions on this class that similarly have efficient algorithms on proper interval graphs, but are \cNP-hard on chordal graphs. 
A few problems in this context include: role assignment (aka locally surjective homomorphism) testing~\cite{Heggernes2012}, the simple max-cut problem~\cite{BodlaenderJ00} (here, the problem is still open even for proper interval graphs, see~\cite{BoyaciES2020-abs-2006-03856}), and the minimum outer-connected dominating set problem~\cite{KeilP13}.

Regarding further NC classes of graphs, a natural next step would be to study the NC-tree-tree graphs. 
But, as we mentioned before, it is not safe to simply work with clique trees in this case as the claw requires the use of a non-clique tree model. 
We conjecture that the NC-tree-tree graphs can be characterized as chordal graphs avoiding a finite set of forbidden induced subgraphs.
It would also be interesting to see if similar algorithmic results on domination and Hamiltonicity problems can be obtained on this class. 

Other host domains have been considered in the literature. 
Notice that, similar to proper interval graphs being NC-path-path graphs, the proper circular arc graphs are precisely the NC-path-cycle graphs. 
A simple host graph class that generalizes both trees and cycles is that of \emph{cacti}. 
A \emph{cactus} is a connected graph in which every 2-connected component is a single vertex, a single edge, or a chordless cycle.
The intersection graphs of 
subtrees of a cactus were studied by Gavril~\cite{Gavril1996}. 
So, one might consider the NC-path/tree/cactus-cactus graphs. 

Finally, an alternative view of host domains has been considered quite recently through the notion of \emph{$H$-graphs}~\cite{ChaplickFGKZ19,Chaplick0VZ17,ChaplickZ17,FominGR20}, i.e., for a fixed graph $H$, a graph $G$ is an $H$-graph when it is an intersection graph of connected subgraphs of a subdivision of $H$. 
Here, interval graphs are the $K_2$-graphs and circular-arc graphs are the $K_3$-graphs. 
While there is a natural notion of proper $H$-graphs~\cite{ChaplickFGKZ19} (which indeed restrict $H$-graphs for every $H$), the more restrictive non-crossing $H$-graphs might have a nicer structure and lead to easier (and faster) algorithms.

\bibliography{claw-free-intersection-graphs}{}
\bibliographystyle{splncs04}

\appendix

\newpage 

\end{document}